\newtheorem{theorem}{Theorem}
\newtheorem{proposition}[theorem]{Proposition}
\newtheorem{lemma}[theorem]{Lemma}
\theoremstyle{definition}
\newtheorem{define}[theorem]{Definition}
\newtheorem{remark}[theorem]{Remark}
\newtheorem{example}[theorem]{Example}
\begin{document}

	\title{Scheduling  Stochastic Real-Time Jobs \\in Unreliable Workers }

%

\author{\IEEEauthorblockN{Yu-Pin Hsu, Yu-Chih Huang, and Shin-Lin Shieh}
	
	\thanks{Y.-P. Hsu, Y.-C. Huang, and S.-L. Shieh are with Department of Communication Engineering, National Taipei University, Taiwan. Email: \texttt{\{yupinhsu, ychuang, slshieh\}@mail.ntpu.edu.tw}. The work was supported by Ministry of Science and Technology of Taiwan under Grant MOST 107-2221-E-305-007-MY3.} 
	
}

\maketitle

\begin{abstract}
We consider a distributed computing network consisting of a master  and multiple workers  processing tasks of \textit{different types}. The master  is running multiple applications. Each application  \textit{stochastically} generates  \textit{real-time jobs}  with a strict job deadline, where each job is a collection of tasks of some types specified by the application. A real-time job is completed only when \textit{all} its tasks are completed by the corresponding workers within the deadline. Moreover, we consider \textit{unreliable} workers, whose processing speeds are uncertain. Because of the limited processing abilities of the workers, an algorithm for scheduling the jobs in the  workers is needed to maximize the average number of completed jobs for each application. The scheduling problem is not only critical but also  practical in  distributed computing networks. In this paper, we develop two scheduling algorithms, namely, a \textit{feasibility-optimal scheduling algorithm} and an \textit{approximate scheduling algorithm}. The feasibility-optimal scheduling algorithm can fulfill the largest region of  applications' requirements for the average number of completed jobs. However, the feasibility-optimal scheduling algorithm  suffers from high computational complexity when the number of applications is large. To address the issue, the approximate scheduling algorithm is  proposed with a guaranteed approximation ratio in the worst-case scenario. The approximate scheduling algorithm is also validated in the average-case scenario via computer simulations.  
\end{abstract}

\begin{IEEEkeywords}
Distributed computing networks, stochastic networks, scheduling algorithms. 
\end{IEEEkeywords}

\section{Introduction} \label{section:introduction}
\begin{figure}[h]
	\centering
	\includegraphics[width=.45\textwidth]{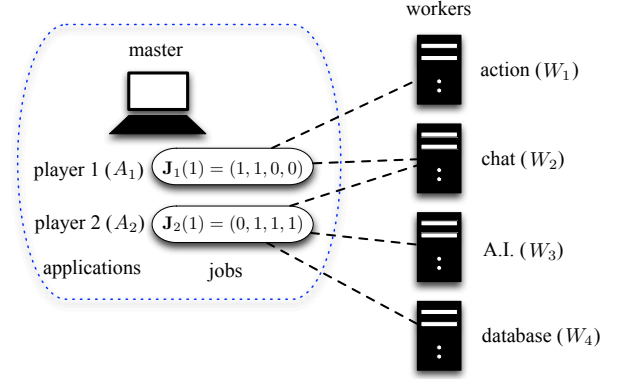}
	\caption{A distributed computing network with two ($N=2$) applications $A_1$ and $A_2$ and four ($M=4$) workers $W_1$, $W_2$,  $W_3$, and $W_4$. At the beginning of frame 1, application $A_1$ generates job $\mathbf{J}_1(1)$ with tasks for workers $W_1$ and $W_2$, and application $A_2$ generates  job $\mathbf{J}_2(1)$ with tasks for workers $W_2$, $W_3$, and $W_4$.}
	\label{fig:network}
\end{figure}
Distributed computing networks (such as MapReduce \cite{dean2008mapreduce}) become increasingly popular to support data-intensive  jobs. The underlying idea to process a data-intensive job is to  divide  the job  into a group of small tasks that can be  processed in parallel by multiple workers. In general, a worker can be specialized to process a type of tasks. For example, MapReduce allows an application to specify its computing network. Another outstanding example is distributed computing networks for massive multiplayer online games \cite{matskin2003scalable}. The online game system illustrated in Fig.~\ref{fig:network} includes one master and four workers processing different types of tasks. The master is serving two players.  While the present job of player~1 needs two types of workers to get completed, that of player 2 needs three types of workers.

Moreover, because of the real-time nature of latency-intensive applications (e.g., online games), a real-time job needs to be completed in a deadline. To maximize the number of  jobs that meet the deadline,  a  scheduling algorithm allocating  workers to jobs is needed. Job-level scheduling poses more challenges than packet-level scheduling. That is because all  tasks in a job are \textit{dependent} in the sense that a job is not completed  until \textit{all} its tasks are completed, but all packets or tasks in traditional packet-based  networks are independently treated.

Most prior research on job-level scheduling considered general-purpose workers. The closest scenario to ours (i.e., specialized workers) is the \textit{coflow} model proposed in \cite{chowdhury2012coflow}, where a coflow is a job consisting of tasks of various types. Since the coflow model was proposed, coflow scheduling has been a hot topic, e.g. \cite{chowdhury2014efficient,li2016efficient,shafiee2018improved,tseng2018coflow,im2019matroid}. See the recent survey paper \cite{wang2018survey}. However, almost all prior research on the coflow scheduling focused on \textit{deterministic} networks; in contrast, little attention was given to  \textit{stochastic} networks.  Note that a job can be randomly generated; moreover, a worker can be unreliable because of unpredictable events \cite{ananthanarayanan2013effective} like hardware failures.  Because of the practical issues,  a scheduling algorithm for \textit{stochastic} real-time jobs in \textit{unreliable} workers is crucial in distributed computing networks. The most relevant works to ours  are \cite{liang2017coflow,li2018efficient}. While \cite{liang2017coflow} focused on homogeneous stochastic jobs in the coflow model, \cite{li2018efficient} extended to a heterogeneous case. The fundamental difference between those relevant works  and ours is that we consider stochastic \textit{real-time} jobs and \textit{unreliable} workers.

In this paper, we consider a master and $M$ specialized workers. The master is running multiple applications, which  stochastically generate real-time jobs with a hard deadline. The workers are unreliable. Our main contribution lies in developing job scheduling algorithms  with provable performance guarantees. Leveraging Lyapunov techniques, we propose a feasibility-optimal scheduling algorithm for maximizing the region of achievable requirements for the average number of completed jobs. However, the feasibility-optimal scheduling algorithm turns out to involve an NP-hard combinatorial optimization problem. To tackle the computational issue, we propose an approximate scheduling algorithm that is computationally tractable; furthermore, we prove that its region of achievable requirements  shrinks by   a factor of at most $1/\sqrt{M}$ from the largest one. More  surprisingly, our simulation results show that the region of achievable requirements  by the approximate scheduling algorithm is close to the largest one.

\section{System overview} \label{section:system}

\subsection{Network model} \label{subsection:network}

Consider a distributed computing network consisting of a master  and $M$ specialized workers $W_1, \cdots, W_M$. The master  is running~$N$ applications $A_1, \cdots, A_N$. Fig.~\ref{fig:network} illustrates an example network with $N=2$ and $M=4$. Suppose that data transfer between the master  and the workers occurs instantaneously with no error. Note that the prior works on the coflow model focused on the time for data transfer. To investigate the unreliability of the workers, we ignore the time for data transfer; instead, focus on the time for computation. 

Divide time into frames and index them by $t=1, 2, \cdots$. At the beginning of each frame~$t$, each application   \textit{stochastically} generates a  job, where a job is a collection of tasks that can be processed by the corresponding workers. Precisely, we use  vector $\mathbf{J}_{i}(t)=(J_{i,1}(t), \cdots, J_{i,M}(t))$ to represent the job generated by application $A_i$ in frame~$t$, where each element $J_{i,j}(t)\in \{0, 1\}$ indicates if the job  has a task for worker $W_j$: if $J_{i,j}(t)=1$, then the job $\mathbf{J}_i(t)$ has a task for  worker $W_j$; otherwise, it does not. See Fig.~\ref{fig:network} for example. Each task is also \textit{stochastically} generated, i.e.,  $J_{i,j}(t)$ is a random variable for all $i$, $j$, and~$t$. By $|\mathbf{J}_i(t)|$ we denote the number of 1's in  vector $\mathbf{J}_i(t)$; in particular, if $|\mathbf{J}_i(t)|=0$, then application $A_i$ generates no job in frame~$t$. Suppose that the probability distribution of random variable $J_{i,j}(t)$ is independently and identically
distributed (i.i.d.) over frame~$t$, for all $i$ and $j$. Suppose that the tasks generated by application $A_i$ for worker $W_j$ have the same workload. See Remark~\ref{remark:time-varying-work} later for time-varying workloads. Moreover, the jobs need real-time computations. Suppose that the deadline for each  job is one frame. The real-time system has been justified in the literature, e.g., see \cite{hou2013packets}.

%

Consider a time-varying processing speed for each worker. Suppose that  the processing speed of each worker is  i.i.d. over frames. 
With the i.i.d. assumption along with those constant workloads, we can assume that a task generated  by application $A_i$ can be completed by  worker $W_j$ (i.e., when $J_{i,j}(t)=1$) with a constant probability $P_{i,j}$ over frames.  At the end of each frame, each  worker reports if its task is completed in that frame. A job is completed only when all its tasks are completed in the arriving frame.  If any task of a  job cannot be completed  in the arriving frame, the job expires and is removed from the application.

Unaware of the completion of a task at the beginning of each frame, we suppose that the master   assigns at most one task to a worker for each frame. If two jobs $\mathbf{J}_i(t)$ and $\mathbf{J}_j(t)$, for some $i$ and $j$, need the same worker in frame $t$, i.e., $J_{i,k}(t)=J_{j,k}(t)=1$  for some $k$, then we say the two jobs have \textit{interference}. For example, jobs $\mathbf{J}_1(1)$ and $\mathbf{J}_2(1)$ in Fig.~\ref{fig:network} have the interference. 

As a result of the interference, the master  has to decide a set  of interference-free jobs for computing in each frame. Let $\mathbf{D}(t) \subseteq \{\mathbf{J}_1(t), \cdots, \mathbf{J}_N(t)\}$ be the set of interference-free jobs decided for computing in frame~$t$. For example, decision $\mathbf{D}(1)$ in Fig.~\ref{fig:network} can be either $\mathbf{J}_1(1)$ or $\mathbf{J}_2(1)$. If $\mathbf{D}(1)=\mathbf{J}_1(1)$ in Fig.~\ref{fig:network}, then workers $W_1$ and $W_2$ are allocated to job $\mathbf{J}_1(1)$ in frame~$1$; moreover, job $\mathbf{J}_1(1)$ is completed only when the two workers complete their respective tasks in frame~1. A scheduling algorithm $\pi=\{\mathbf{D}(1), \mathbf{D}(2), \cdots\}$ is a time sequence of the decisions for all frames. 
%

\subsection{Problem formulation} \label{subsection:problem}
Let random variable $e_i(t;\pi) \in \{0, 1\}$ indicate if job $\mathbf{J}_i(t)$ is completed in frame~$t$ under scheduling algorithm~$\pi$, where  $e_i(t;\pi)=1$ if job $\mathbf{J}_i(t)$ is generated (i.e., $|\mathbf{J}_i(t)|\neq 0$) and all  tasks of the job are completed by the corresponding workers in frame~$t$; $e_i(t;\pi)=0$ otherwise. The random variable $e_i(t;\pi)$ depends on the random variables $J_{i,j}(t)$, the task completion probabilities $P_{i,j}$, and a potential randomized scheduling algorithm~$\pi$. 

We define the average number $\mathscr{N}_i(\pi)$ of completed jobs for application $A_i$ under scheduling algorithm $\pi$ by
\begin{align}
\mathscr{N}_i(\pi)=\liminf_{T \rightarrow \infty} \frac{\sum_{t=1}^T E[e_i(t;\pi)]}{T}.  \label{eq:completion-rate}
\end{align}
Let vector~$\mathbf{r}=(r_1, \cdots, r_N)$  represent an applications' \textit{requirement} for the average numbers of completed jobs. We say that requirement~$\mathbf{r}$ can be \textit{fulfilled} (or \textit{achieved}) by scheduling algorithm~$\pi$ if $\mathscr{N}_i(\pi) \geq r_i$ for all~$i$. Moreover, we refer to requirement~$\mathbf{r}$ as a \textit{feasible requirement} if there exists \textit{a} scheduling algorithm that can fulfill the requirement. 
We define the \textit{maximum feasibility region} as follows. 
\begin{define}
The \textbf{maximum feasibility region} $\mathbf{R}_{\max}$ is the ($N$-dimensional) region consisting of all feasible requirements~$\mathbf{r}$. 
\end{define} 

We define an optimal scheduling algorithm as follows.
\begin{define}
A scheduling algorithm~$\pi$ is called a \textbf{feasibility-optimal}\footnote{The feasibility-optimal scheduling defined in this paper  is analogous to the throughput-optimal scheduling (e.g., \cite{neely2010stochastic}) or the timely-throughput-optimal scheduling (e.g., \cite{hou2013packets}).}  scheduling algorithm if, for any requirement~$\mathbf{r}$ \textit{interior}\footnote{We say that requirement $\mathbf{r}=(r_1, \cdots, r_N)$ is \textit{interior} of the region $\mathbf{R}_{\max}$ if there exists an $\epsilon>0$ such that $\mathbf{r}+\epsilon=(r_1+\epsilon, \cdots, r_N+\epsilon)$ lies in the region $\mathbf{R}_{\max}$. The concept of the \textit{strictly} feasible requirement has been widely used in the throughput-optimal scheduling or timely-throughput-optimal scheduling.} of $\mathbf{R}_{\max}$, it can be fulfilled by the scheduling algorithm~$\pi$. 
\end{define}

The goal of this paper is to devise a \textit{feasibility-optimal} scheduling algorithm.

%

\section{Scheduling algorithm design} \label{section:design}

In this section, we develop a feasibility-optimal scheduling algorithm for managing the stochastic real-time jobs in the unreliable workers. To that end, we introduce a virtual queueing network in Section~\ref{subsection:virtual-queue}. With the assistance of the virtual queueing network, we propose a feasibility-optimal  scheduling design in Section~\ref{subsection:optimal-design}. However, the proposed feasibility-optimal scheduling algorithm involves a combinatorial optimization problem.  We show that the combinatorial optimization problem is NP-hard. Thus, we develop a tractable \textit{approximate  scheduling algorithm} in Section~\ref{subsection:approximate}; meanwhile, we establish its approximation ratio.

\subsection{Virtual queueing network} \label{subsection:virtual-queue}

Given the distributed computing network  with  scheduling algorithm~$\pi$ and requirement~$\mathbf{r}$, we construct a virtual queueing network. The virtual queueing network  consists of~$N$ queues $Q_1, \cdots, Q_N$, operating under the same frame system as that in Section~\ref{subsection:network}. For example, Fig.~\ref{fig:queue} is the virtual queueing network for the distributed computing network in Fig.~\ref{fig:network}.  We want to emphasize that the virtual queueing network is not a real-world network. It is introduced for the scheduling design in Section~\ref{subsection:optimal-design}.

\begin{figure}[t]
	\centering
	\includegraphics[width=.35\textwidth]{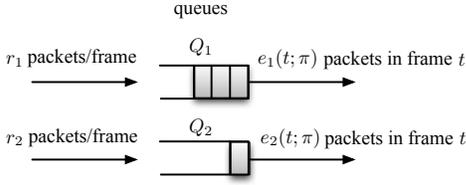}
	\caption{Virtual queueing network for the distributed computing network in Fig.~\ref{fig:network} with scheduling algorithm~$\pi$ and requirement $(r_1, r_2)$.}
	\label{fig:queue}
\end{figure}

At the beginning of each frame~$t$, a fixed number\footnote{The virtual queueing network has a fractional number of packets.} $r_i$ of packets arrive at  queue $Q_i$. 
  At the end of frame~$t$, queue $Q_i$ can remove $e_i(t;\pi)$ packet, i.e., if job $\mathbf{J}_i(t)$ is completed in frame~$t$, then queue $Q_i$ can remove one packet at the end of frame~$t$; otherwise, it  removes no packet in frame~$t$. Again, note that those packets are not real-world packets.
  We summarize the packet arrival rate and the packet service rate as follows.

 \begin{proposition}\label{proposition:arrival-service}
 The packet arrival rate for queue~$Q_i$ is~$r_i$, and the  packet service rate for queue~$Q_i$ is $\mathscr{N}_i(\pi)$. 	
 \end{proposition}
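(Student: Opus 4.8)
The plan is to verify the two claims directly from the definitions, since Proposition~\ref{proposition:arrival-service} is essentially a restatement of the virtual queue's construction. First I would address the arrival rate: by construction, exactly $r_i$ (fractional) packets are injected into $Q_i$ at the beginning of every frame, so the cumulative arrivals up to frame $T$ total $r_i T$, and the long-run arrival rate is $\lim_{T\to\infty} r_i T / T = r_i$. This part is immediate and requires no probabilistic argument.

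The service-rate claim is where the small amount of work lies. The number of packets removed from $Q_i$ in frame $t$ is $e_i(t;\pi)$, so the cumulative service up to frame $T$ is $\sum_{t=1}^{T} e_i(t;\pi)$, and the (expected) long-run service rate is $\liminf_{T\to\infty} \frac{1}{T}\sum_{t=1}^{T} E[e_i(t;\pi)]$, which is precisely $\mathscr{N}_i(\pi)$ by the definition in~\eqref{eq:completion-rate}. I would be careful to note that ``service rate'' here is understood in the time-average (expected) sense consistent with how $\mathscr{N}_i(\pi)$ was defined, and that the $\liminf$ rather than $\lim$ is used because the limit need not exist for an arbitrary scheduling algorithm~$\pi$; this matches the convention already adopted in~\eqref{eq:completion-rate}. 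One subtlety worth a sentence: $Q_i$ can only remove a packet if it is nonempty, but since $r_i$ packets arrive each frame and at most one departs, the queue is nonempty whenever $r_i \geq 1$, or more generally the fractional backlog is large enough that a removal of $e_i(t;\pi)\le 1$ packet is always admissible; hence the number actually removed equals $e_i(t;\pi)$ and no clipping occurs. Alternatively, one simply defines the departure process to be $e_i(t;\pi)$ regardless, treating the virtual queue as possibly taking negative values, in which case even this remark is unnecessary.

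The main (and only) obstacle is not a mathematical difficulty but a matter of stating precisely what ``rate'' means, so that the identification with $r_i$ and with $\mathscr{N}_i(\pi)$ is a genuine equality rather than an informal analogy; once the time-average convention is fixed, both halves follow by unwinding definitions. I therefore expect the proof to be two or three lines.
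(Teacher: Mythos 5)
Your proof is correct and follows essentially the same route as the paper: both halves are obtained by unwinding the definitions, computing the time-average arrival rate $\lim_{T\to\infty} r_i T/T = r_i$ and identifying the time-average expected service $\liminf_{T\to\infty}\frac{1}{T}\sum_{t=1}^T E[e_i(t;\pi)]$ with $\mathscr{N}_i(\pi)$ via Eq.~(\ref{eq:completion-rate}). Your extra remark on clipping is harmless but unnecessary, since the paper (consistent with the standard convention behind Proposition~\ref{proposition:stable}) treats $e_i(t;\pi)$ as the offered service process.
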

\begin{proof}
The packet arrival rate  for $Q_i$ is $\lim_{T \rightarrow \infty}\frac{\sum_{i=1}^T r_i}{T}=r_i$. The packet service rate for $Q_i$ is $\liminf_{T \rightarrow \infty}\frac{\sum_{t=1}^T E[e_i(t;\pi)]}{T}=\mathscr{N}_i(\pi)$.	
\end{proof}
 
Let $Q_i(t)$ be the queue size at queue $Q_i$ at the beginning (before new packet arrival) of frame~$t$. Then, the queueing dynamics of queue $Q_i$ can be expressed by $Q_i(t+1)=\max\{Q_i(t)+r_i-e_i(t;\pi),0\}$. Let $\mathbf{Q}(t)=(Q_1(t), \cdots, Q_N(t))$ be the vector of all queue sizes at the beginning of frame~$t$. We define the notion of a \textit{stable}  queue in Definition~\ref{define:stable}, followed by introducing a necessary condition for the stable queue in Proposition~\ref{proposition:stable}.
  
  \begin{define} \label{define:stable}
  Queue $Q_i$ is \textbf{stable} if the average queue size $\limsup_{T \rightarrow \infty}  \frac{\sum_{t=1}^T E[Q_i(t)]}{T}$ is finite.
  \end{define}

  \begin{proposition}[\cite{georgiadis2006resource}, Lemma 3.6]\label{proposition:stable}
  If queue $Q_i$ is stable, then its packet service rate is greater than or equal to its packet arrival rate.  	
  \end{proposition}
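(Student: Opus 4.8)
The plan is to prove the statement in the direction stated: assuming $Q_i$ is stable in the sense of Definition~\ref{define:stable}, I would show $\mathscr{N}_i(\pi)\ge r_i$, which by Proposition~\ref{proposition:arrival-service} is exactly ``service rate $\ge$ arrival rate.'' Everything is driven by the one-line queueing recursion $Q_i(t+1)=\max\{Q_i(t)+r_i-e_i(t;\pi),0\}$, which immediately gives the slackened inequality $Q_i(t+1)\ge Q_i(t)+r_i-e_i(t;\pi)$.

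First I would telescope this inequality over $t=1,\dots,T$ to get $Q_i(T+1)-Q_i(1)\ge Tr_i-\sum_{t=1}^T e_i(t;\pi)$. Taking expectations (with $Q_i(1)$ a fixed finite constant, say $0$), dividing by $T$, and rearranging yields
\begin{align}
\frac{\sum_{t=1}^T E[e_i(t;\pi)]}{T}\;\ge\; r_i-\frac{E[Q_i(T+1)]}{T}+\frac{E[Q_i(1)]}{T}. \nonumber
\end{align}
By \eqref{eq:completion-rate}, the $\liminf$ of the left-hand side is $\mathscr{N}_i(\pi)$, and $E[Q_i(1)]/T\to 0$, so the whole claim reduces to showing that the nuisance term $E[Q_i(T+1)]/T$ does not survive in the limit, i.e.\ that it vanishes (at least along a suitable subsequence, or, more cleanly, that $\limsup_T E[Q_i(T+1)]/T=0$).

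This last point is the only nontrivial step, and I would argue it by contradiction. If $E[Q_i(T+1)]/T$ did not vanish, there would be a subsequence $T_k\to\infty$ and a constant $\delta>0$ with $E[Q_i(T_k+1)]\ge \delta T_k$. The key extra observation is that the expected queue length cannot drop too quickly per frame: from the recursion, $Q_i(t+1)\ge Q_i(t)-1$ (using $e_i(t;\pi)\le 1$ and $r_i\ge 0$), hence $E[Q_i(t+1)]\ge E[Q_i(t)]-1$, which propagated backwards gives $E[Q_i(T_k+1-s)]\ge \delta T_k-s$ for $0\le s\le T_k$. Thus $E[Q_i(t)]\ge \delta T_k/2$ for every $t$ in an interval of length about $\delta T_k/2$ ending near $T_k$, forcing $\frac{1}{T_k}\sum_{t=1}^{T_k}E[Q_i(t)]$ to be of order $\delta^2 T_k/4$, which diverges and contradicts the finiteness in Definition~\ref{define:stable}. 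Plugging the resulting fact back into the displayed inequality (taking $\liminf$ along the good subsequence, or using $\limsup_T E[Q_i(T+1)]/T=0$) delivers $\mathscr{N}_i(\pi)\ge r_i$.

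The main obstacle is precisely that third paragraph: it is tempting to conclude $E[Q_i(T)]/T\to0$ directly from the Cesàro average of $E[Q_i(t)]$ being finite, but a finite Cesàro average alone permits sparse tall spikes; one genuinely needs the bounded per-frame increments of the queue to rule them out. Once this ``Lipschitz-in-time'' property of $E[Q_i(\cdot)]$ is exploited, the remainder is routine bookkeeping. (This is the standard argument underlying \cite{georgiadis2006resource}, Lemma~3.6.)
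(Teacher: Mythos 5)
Your proof is correct, but note that the paper does not prove this proposition at all --- it imports it verbatim as Lemma~3.6 of \cite{georgiadis2006resource} --- so what you have done is reconstruct the standard ``strong stability implies rate stability'' argument behind the cited result rather than mirror anything in the text. Your reconstruction is sound: the slackened recursion $Q_i(t+1)\ge Q_i(t)+r_i-e_i(t;\pi)$, telescoping, and expectations reduce everything to $\limsup_T E[Q_i(T+1)]/T=0$, and you correctly identify that this does \emph{not} follow from the finite Ces\`aro average of Definition~\ref{define:stable} alone; your use of the bounded per-frame decrement ($e_i\le 1$, $r_i\ge 0$, hence $E[Q_i(t+1)]\ge E[Q_i(t)]-1$) to rule out sparse spikes is exactly the right fix, and the contradiction argument as written does deliver the full $\limsup$ statement, not just a subsequence. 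One small caution: the parenthetical ``along a suitable subsequence'' fallback would not actually suffice, because $\mathscr{N}_i(\pi)$ in \eqref{eq:completion-rate} is a $\liminf$ over all $T$, so a bound established only along a favorable subsequence $T_k$ controls $\liminf_k$ of that subsequence but not $\mathscr{N}_i(\pi)$ itself; since your contradiction argument proves the stronger $\limsup_T E[Q_i(T+1)]/T=0$, the conclusion $\mathscr{N}_i(\pi)\ge r_i$, i.e.\ service rate at least arrival rate in the sense of Proposition~\ref{proposition:arrival-service}, goes through. What your approach buys is a short, self-contained justification of a step the paper leaves to an external reference, at the cost of reproving a textbook lemma.
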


 By Propositions~\ref{proposition:arrival-service} and~\ref{proposition:stable}, we can turn our  attention to developing a scheduling algorithm  such that, for any requirement~$\mathbf{r}$ interior of $\mathbf{R}_{\max}$, all  queues in the virtual queueing network are stable.

We want to emphasize that, unlike traditional  stochastic  networks (e.g., \cite{neely2010stochastic,hou2013packets}),   each packet in our virtual queueing network can be removed only when \textit{all}  associated tasks are completed in its arriving frame.  Thus, our paper generalizes to stochastic  networks with \textit{multiple required servers}; in particular, we develop a tractable approximate scheduling algorithm for the scenario in Section~\ref{subsection:approximate}. 

\subsection{Feasibility-optimal scheduling algorithm} \label{subsection:optimal-design}

\begin{algorithm}[t]
	\SetAlgoLined 
	\SetKwFunction{Union}{Union}\SetKwFunction{FindCompress}{FindCompress} \SetKwInOut{Input}{input}\SetKwInOut{Output}{output}
%
	\tcc{At the beginning of  frame $1$, perform as follows:}
	$Q_i(1) \leftarrow 0$\,\,for all $i$\;\label{alg:max-weight-q-int}
	
	\tcc{At the beginning of each frame $t=1, 2, \cdots$, perform as follows:}
	
		$Q_i(t) \leftarrow Q_i(t-1)+r_i$\,\, for all $i$\; \label{alg:max-weight:q1}
	
	Perform a set $\mathbf{D}(t) \subseteq \{\mathbf{J}_1(t), \cdots, \mathbf{J}_N(t)\}$ of interference-free jobs such that 
	\begin{align}
	\sum_{i: \mathbf{J}_i(t) \in \mathbf{D}(t)} Q_i(t)  \left(\mathbf{1}_{|\mathbf{J}_i(t)| \neq 0}\cdot   \prod_{j:J_{i,j}(t)=1}P_{i,j}\right)  \label{eq:opt}
	\end{align}
	is maximized\;  \label{alg:max-weight:d}
	
		\tcc{At the end of each frame $t$, perform as follows:}
	\For{$i=1$ \KwTo $N$\label{alg:max-weight:q2-start}}{
	\If{$\mathbf{J}_{i}(t) \in \mathbf{D}(t)$, $|\mathbf{J}_{i}(t)| \neq 0$, and \text{all its workers complete their respective tasks}}{
		$Q_i(t) \leftarrow \max\{Q_i(t)-1,0\}$\; \label{alg:max-weight:q2}	
	}	
	\label{alg:max-weight:q2-end}}
	\caption{Feasibility-optimal scheduling algorithm.}
	\label{alg:max-weight}
\end{algorithm}

In this section, we propose a feasibility-optimal scheduling algorithm in Alg.~\ref{alg:max-weight}. At the beginning of frame~1, Alg.~\ref{alg:max-weight} (in Line~\ref{alg:max-weight-q-int}) initializes all  queue sizes to be zeros. At the beginning of each frame~$t$, Alg.~\ref{alg:max-weight}  (in Line~\ref{alg:max-weight:q1}) updates each queue $Q_i$  with the new arriving~$r_i$ packets; then, Alg.~\ref{alg:max-weight} (in Line~\ref{alg:max-weight:d}) decides $\mathbf{D}(t)$  for that frame according to the \textit{present} queue size vector~$\mathbf{Q}(t)$. The decision $\mathbf{D}(t)$ is made for maximizing the weighted sum of the queue sizes in Eq.~(\ref{eq:opt}). The term $\mathbf{1}_{|\mathbf{J}_i(t)|\neq 0} \cdot \prod_{j:J_{i,j}(t)=1}P_{i,j}$ in Eq.~(\ref{eq:opt}) calculates the expected packet service rate for $Q_i$,  where the indicator function $\mathbf{1}_{|\mathbf{J}_i(t)|\neq 0}$ indicates if job $\mathbf{J}_i(t)$ is generated in frame~$t$, and if so, that job can be completed with probability  $\prod_{j:J_{i,j}(t)=1}P_{i,j}$. The underlying idea of Alg.~\ref{alg:max-weight} is to remove as many packets from the virtual queueing network as possible (for stabilizing all queues).

After performing the decision $\mathbf{D}(t)$, Alg.~\ref{alg:max-weight} (in Line~\ref{alg:max-weight:q2}) updates each $Q_i$  at the end of frame~$t$: if job $\mathbf{J}_i(t)$ is scheduled, the job is indeed generated, and all its required workers complete their respective tasks, then one packet is removed from queue~$Q_i$ in the virtual queuing network. 

\begin{example}\label{ex:1}
Take Figs.~\ref{fig:network} and \ref{fig:queue} for example. Suppose that $P_{1,j}=0.8$ and $P_{2,j}=0.9$ for all $j$, and $\mathbf{r}=(0.48, 0.5)$. According to Line~\ref{alg:max-weight:d}, Alg.~\ref{alg:max-weight} calculates $Q_1(1)\prod_{j:J_{1,j}(1)=1}P_{1,j}=0.48\cdot0.8^2=0.3072$ and $Q_2(1)\prod_{j:J_{2,j}(1)=1}P_{2,j}=0.5 \cdot 0.9^3=0.3645$. Thus, Alg.~\ref{alg:max-weight} decides to compute $\mathbf{J}_2(1)$ for frame~1. If workers $W_2$, $W_3$, and $W_4$ in Fig.~\ref{fig:network} can complete their respective tasks in frame~1, then  one packet is removed from queue $Q_2$ in Fig.~\ref{fig:queue} at the end of frame~1, i.e., queue $Q_2$ has $\max\{0.5-1,0\}=0$ packet at the end of frame~1.  
\end{example}

Leveraging Lyapunov techniques \cite{neely2010stochastic}, we can establish the optimality of Alg.~\ref{alg:max-weight} in the following. 

\begin{theorem} \label{theorem:optimal}
Alg.~\ref{alg:max-weight} is a feasibility-optimal scheduling algorithm.  
\end{theorem}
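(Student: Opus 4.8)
The plan is to run the standard quadratic-Lyapunov drift argument of \cite{neely2010stochastic} on the virtual queueing network of Section~\ref{subsection:virtual-queue}, exploiting the fact that Line~\ref{alg:max-weight:d} of Alg.~\ref{alg:max-weight} is a max-weight rule. I fix an arbitrary requirement $\mathbf{r}$ interior of $\mathbf{R}_{\max}$, so that there is an $\epsilon>0$ for which $(r_1+\epsilon,\dots,r_N+\epsilon)$ lies in $\mathbf{R}_{\max}$. I would take the Lyapunov function $L(\mathbf{Q}(t))=\frac{1}{2}\sum_{i=1}^{N}Q_i(t)^2$ and analyze the one-frame conditional drift $\Delta(t)=E[L(\mathbf{Q}(t+1))-L(\mathbf{Q}(t))\mid\mathbf{Q}(t)]$. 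Starting from the queueing dynamics $Q_i(t+1)=\max\{Q_i(t)+r_i-e_i(t;\pi),0\}$ and using $(\max\{a,0\})^2\le a^2$, a routine manipulation gives
\begin{align*}
\Delta(t)\le B+\sum_{i=1}^{N}Q_i(t)\,r_i-\sum_{i=1}^{N}Q_i(t)\,E\big[e_i(t;\pi)\mid\mathbf{Q}(t)\big],
\end{align*}
where $B$ is a finite constant bounding $\frac{1}{2}\sum_{i=1}^{N}E[(r_i-e_i(t;\pi))^2\mid\mathbf{Q}(t)]$ uniformly in $t$ and in the scheduling decision (the $r_i$ are fixed and each $e_i(t;\pi)\in\{0,1\}$).

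The second step is to recognize that Alg.~\ref{alg:max-weight} minimizes the right-hand side of this bound, frame by frame. Conditioning additionally on the realized jobs $\{\mathbf{J}_i(t)\}_i$, a scheduled nonempty job is completed exactly when all of its (independent) workers succeed, so $E[e_i(t;\pi)\mid\mathbf{Q}(t),\{\mathbf{J}_i(t)\}_i]=\mathbf{1}_{\mathbf{J}_i(t)\in\mathbf{D}(t)}\cdot\mathbf{1}_{|\mathbf{J}_i(t)|\neq0}\cdot\prod_{j:J_{i,j}(t)=1}P_{i,j}$. Hence $\sum_i Q_i(t)\,E[e_i(t;\pi)\mid\mathbf{Q}(t),\{\mathbf{J}_i(t)\}_i]$ is precisely the quantity in Eq.~(\ref{eq:opt}) that Line~\ref{alg:max-weight:d} maximizes over interference-free $\mathbf{D}(t)$; averaging over the jobs, Alg.~\ref{alg:max-weight} therefore minimizes the negative term, hence the whole upper bound on $\Delta(t)$, among all scheduling algorithms and pathwise in $\mathbf{Q}(t)$.

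The third step is to supply a comparison policy. Because the jobs $J_{i,j}(t)$ and the worker outcomes are i.i.d.\ over frames, the maximum feasibility region $\mathbf{R}_{\max}$ is characterized by stationary randomized policies, so from $(r_1+\epsilon,\dots,r_N+\epsilon)\in\mathbf{R}_{\max}$ I obtain a policy $\pi^{\mathrm{stat}}$ that chooses an interference-free set in each frame as a function of the current jobs only and satisfies $E[e_i(t;\pi^{\mathrm{stat}})]\ge r_i+\epsilon$ for all $i$, independently of $\mathbf{Q}(t)$. Substituting $\pi^{\mathrm{stat}}$ into the minimized drift bound yields
\begin{align*}
\Delta(t)\le B+\sum_{i=1}^{N}Q_i(t)\,r_i-\sum_{i=1}^{N}Q_i(t)(r_i+\epsilon)=B-\epsilon\sum_{i=1}^{N}Q_i(t),
\end{align*}
the standard negative-drift inequality. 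Taking expectations, summing over $t=1,\dots,T$, telescoping $L$, and dividing by $\epsilon T$ gives $\limsup_{T\rightarrow\infty}\frac{1}{T}\sum_{t=1}^{T}\sum_{i=1}^{N}E[Q_i(t)]\le B/\epsilon<\infty$, so every queue $Q_i$ is stable in the sense of Definition~\ref{define:stable}. Proposition~\ref{proposition:stable} then gives that the service rate of $Q_i$ is at least its arrival rate, and Proposition~\ref{proposition:arrival-service} identifies these as $\mathscr{N}_i(\pi)$ and $r_i$ respectively; thus $\mathscr{N}_i(\pi)\ge r_i$ for every $i$, i.e., $\mathbf{r}$ is fulfilled. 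As $\mathbf{r}$ was an arbitrary interior point of $\mathbf{R}_{\max}$, this proves Alg.~\ref{alg:max-weight} is feasibility-optimal.

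I expect the comparison-policy step to be the main obstacle: one must argue carefully that every strictly feasible requirement is attained by some \emph{stationary randomized} scheduling rule whose per-frame completion probabilities do not depend on the virtual queue state, so that it can be plugged into the pathwise drift bound — this is exactly where the i.i.d.\ assumptions on $J_{i,j}(t)$ and on the worker speeds are used. Everything else (the drift computation, the identification of Eq.~(\ref{eq:opt}) as the drift-minimizing term, and the passage from bounded average queue length to stability via Propositions~\ref{proposition:arrival-service} and~\ref{proposition:stable}) is routine.
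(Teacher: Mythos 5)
Your proposal is correct and follows essentially the same route as the paper: the paper's proof of Theorem~\ref{theorem:optimal} simply invokes the standard quadratic-Lyapunov max-weight argument of \cite[Chapter 4]{neely2010stochastic} together with the i.i.d.\ state property and Propositions~\ref{proposition:arrival-service} and~\ref{proposition:stable}, which is exactly what you spell out. The only step you flag as delicate --- the existence of a stationary randomized comparison policy achieving $r_i+\epsilon$ independently of $\mathbf{Q}(t)$ --- is precisely what the paper later isolates as Lemma~\ref{lemma:stationary} (via \cite[Appendix 4.A]{neely2010stochastic}) in the proof of Theorem~\ref{theorem:approximation-ratio}, so your treatment matches and in fact gives more detail than the paper's own two-line proof.
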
      
\begin{proof}
Let vector $(\mathbf{J}_1(t), \cdots, \mathbf{J}_N(t))$ represent the \textit{state} of the virtual queueing network in frame~$t$. Note that the state changes over frames but its probability distribution is i.i.d., according to the assumption in Section~\ref{subsection:network}. Following the  standard argument of the Lyapunov theory in \cite[Chapter 4]{neely2010stochastic} along with  the i.i.d. property of the state, we can prove that for any requirement~$\mathbf{r}$ interior of $\mathbf{R}_{\max}$, all queues in the virtual queueing network (associated with Alg.~\ref{alg:max-weight}) are stable. That is, Alg.~\ref{alg:max-weight} can fulfill the requirement~$\mathbf{r}$ by Propositions~\ref{proposition:arrival-service} and~\ref{proposition:stable}. Thus, Alg.~\ref{alg:max-weight} is feasibility-optimal.
\end{proof}              

Note that Alg.~\ref{alg:max-weight} involves a combinatorial optimization problem in Line~\ref{alg:max-weight:d}. In the next section, we will investigate the computational complexity for solving the combinatorial optimization problem.  

\subsection{Tractable approximate scheduling algorithm} \label{subsection:approximate}
We show (in the next lemma) that the combinatorial optimization problem in Line~\ref{alg:max-weight:d} of Alg.~\ref{alg:max-weight} is NP-hard. Therefore, Alg.~\ref{alg:max-weight} is computationally intractable. 

\begin{lemma}\label{lemma:np}
The combinatorial optimization problem in Alg.~\ref{alg:max-weight} in frame~$t$ is NP-hard, for all $t$. 	
\end{lemma}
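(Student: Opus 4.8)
\emph{Proof proposal.} The plan is to recognize the optimization in Line~\ref{alg:max-weight:d} for what it is — a weighted set-packing problem, or equivalently maximum-weight independent set on the \emph{interference graph} (vertices = jobs $\mathbf{J}_1(t),\dots,\mathbf{J}_N(t)$, edges = interfering pairs) — and then give a polynomial-time reduction from the classical \textsc{Maximum Independent Set} problem, which I may assume to be NP-hard. Concretely, given a graph $G=(V,E)$ with $V=\{v_1,\dots,v_n\}$ and $E=\{e_1,\dots,e_m\}$, I would build an instance of the Line~\ref{alg:max-weight:d} problem with $N=n$ jobs and $M=m+n$ workers: for $1\le j\le m$ put $J_{i,j}(t)=1$ iff edge $e_j$ is incident to vertex $v_i$; for $m+1\le j\le m+n$ put $J_{i,j}(t)=1$ iff $j=m+i$ (a private worker dedicated to job $i$, included only so that no $\mathbf{J}_i(t)$ is the empty job); and set every success probability $P_{i,j}=1$ and every queue length $Q_i(t)=1$.

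The next step is to verify the two facts that make the reduction work. First, since each job is non-empty and $\prod_{j:J_{i,j}(t)=1}P_{i,j}=1$, the weight of job $i$ in Eq.~(\ref{eq:opt}) equals $Q_i(t)=1$, so the objective value of any interference-free $\mathbf{D}(t)$ equals $|\mathbf{D}(t)|$. Second, for $i\neq k$ the jobs $\mathbf{J}_i(t)$ and $\mathbf{J}_k(t)$ interfere iff they share one of the edge-workers $W_1,\dots,W_m$ (the private workers being pairwise distinct), which happens iff some edge $e_j$ is incident to both $v_i$ and $v_k$, i.e.\ iff $\{v_i,v_k\}\in E$. Hence $\mathbf{D}(t)$ is a set of interference-free jobs precisely when $\{v_i:\mathbf{J}_i(t)\in\mathbf{D}(t)\}$ is an independent set of $G$, and a maximizer of Eq.~(\ref{eq:opt}) corresponds to a maximum independent set of $G$. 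Since the construction is polynomial and its form does not depend on the frame index, a polynomial-time algorithm solving Line~\ref{alg:max-weight:d} in \emph{any} frame $t$ would solve \textsc{Maximum Independent Set} in polynomial time; this establishes NP-hardness for all $t$.

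I do not anticipate a genuine obstacle — the reduction is routine once the set-packing structure is identified. The only points needing care are: (i) ruling out empty jobs, so the indicator $\mathbf{1}_{|\mathbf{J}_i(t)|\neq 0}$ does not annihilate a job's weight (handled by the private workers; alternatively, one may first delete the isolated vertices of $G$, which always lie in some maximum independent set); and (ii) being explicit that ``the problem in frame $t$'' takes the queue-size vector $\mathbf{Q}(t)$, the realized job vectors, and the probabilities $P_{i,j}$ as input, so that the choice $Q_i(t)=1$, $P_{i,j}=1$ is legitimate — and indeed realizable, since under a non-degenerate i.i.d.\ arrival distribution every $0/1$ pattern of job vectors occurs with positive probability. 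If a stronger statement were desired, one could instead reduce from \textsc{3-Dimensional Matching} to show hardness persists even when every job contains at most three tasks, but the Independent Set reduction already suffices for Lemma~\ref{lemma:np}.
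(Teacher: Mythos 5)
Your reduction is correct and rests on the same key idea as the paper's proof: trivialize the weights in Eq.~(\ref{eq:opt}) (all $P_{i,j}=1$, equal queue values) so that the per-frame problem in Line~\ref{alg:max-weight:d} of Alg.~\ref{alg:max-weight} becomes exactly ``pick a maximum-cardinality set of interference-free jobs,'' and then identify that cardinality problem with a classical NP-hard packing problem. The only real difference is the source problem: the paper reduces directly from set packing, letting each job \emph{be} one of the given non-empty sets and each worker be an element of the universe, which makes the correspondence immediate and also feeds the later remark that $\sqrt{M}$ is the best possible approximation ratio (via the square-root hardness of set packing over a universe of size $M$); you instead reduce from maximum independent set by encoding vertices as jobs whose tasks are their incident edges, adding private workers to keep every job non-empty. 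Both reductions are polynomial and sound; yours requires the extra gadget (private workers) and the extra bookkeeping about realizability of the chosen $\mathbf{Q}(t)$, which you handle explicitly and which the paper handles implicitly by assuming no jobs arrive before frame $t$ so that $Q_i(t)=r\cdot t$ is the same constant for all $i$. In short: same approach, different but equivalent hard problem as the starting point, with the paper's choice being slightly more economical and better aligned with its approximation-ratio discussion.
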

\begin{proof}
We construct a reduction from the set packing problem \cite{halldorsson1998independent}.
See Appendix~\ref{appendix:lemma:np} for details.
\end{proof}

\begin{algorithm}[t]
	\SetAlgoLined 
	\SetKwFunction{Union}{Union}\SetKwFunction{FindCompress}{FindCompress} \SetKwInOut{Input}{input}\SetKwInOut{Output}{output}
	%

	\tcc{At the beginning of  frame $1$, perform as follows:}
$Q_i(1) \leftarrow 0$\,\,for all $i$\;	
	
	\tcc{At the beginning of each frame $t=1, 2, \cdots$, perform as follows:}

	$Q_i(t) \leftarrow Q_i(t-1)+r_i$\,\,for all $i$\;
		
	$\mathbf{U} \leftarrow \{W_1, \cdots, W_M\}$\; \label{alg:approximate:u-initial}
	$\mathbf{D}(t) \leftarrow \emptyset$\;

	Sort  all  jobs $\mathbf{J}_i(t)$ according to the values of 
	\begin{align}
	\left\{
	\begin{array}{ll}
	\frac{Q_i(t) \prod_{j:J_{i,j}(t)=1}P_{i,j}}{\sqrt{|\mathbf{J}_i(t)|}} & \text{if $|\mathbf{J}_i(t)| \neq 0$;}\\
	0 & \text{else,}
	\end{array}
	\right.
	\label{eq:apx}
	\end{align}
	 to obtain the sorted jobs $\mathbf{J}^{(1)}(t), \cdots, \mathbf{J}^{(N)}(t)$\;  \label{alg:approximate:value}
	
	\For{$i=1$ \KwTo $N$ \label{alg:approximate:for-begin}}{
		\If{$|\mathbf{J}^{(i)}(t)|\neq 0$  and $\{W_j: J^{(i)}_{j}(t)=1\} \subseteq \mathbf{U}$\label{alg:approximate:condition}}{
			$\mathbf{D}(t) \leftarrow \mathbf{D}(t) \cup \mathbf{J}^{(i)}(t)$\; \label{alg:approximate:d}
			$\mathbf{U} \leftarrow \mathbf{U}-\{W_j: J^{(i)}_{j}(t)=1\}$\; \label{alg:approximate:u}
			
		}
	
	\label{alg:approximate:for-end}}

	Perform the decision $\mathbf{D}(t)$\; \label{alg:approximate:perform}
		\tcc{At the end of each frame $t$, perform as follows:}
\For{$i=1$ \KwTo $N$}{
	\If{$\mathbf{J}_i(t) \in \mathbf{D}(t)$, $|\mathbf{J}_{i}(t)| \neq 0$, and \text{all its workers complete their respective tasks}}{
		$Q_i(t) \leftarrow \max\{Q_i(t)-1,0\}$\; 	\label{alg:approximate:q}
	}	
	}
	\caption{Approximate scheduling algorithm.}
	\label{alg:approximate}
\end{algorithm}

To study the NP-hard problem, we define two notions of  approximation ratios as follows. While Definition~\ref{define:apx1} studies the resulting value in Eq.~(\ref{eq:opt}), Definition~\ref{define:apx2} investigates the resulting region of achievable requirements. 

\begin{define}\label{define:apx1}
Given queue size vector $\mathbf{Q}(t)$ in frame~$t$. Let $OPT(t)$ be the value in Eq.~(\ref{eq:opt}) computed by Alg.~\ref{alg:max-weight} in frame~$t$. Let $APX(t;\pi)$ be the value in Eq.~(\ref{eq:opt}) computed by scheduling algorithm~$\pi$ in frame $t$. Then, the scheduling algorithm~$\pi$ is called a \textbf{$p$-approximate scheduling algorithm to Eq.~(\ref{eq:opt})} if $OPT(t)/APX(t;\pi) \leq p$ for  all possible $\mathbf{Q}(t)$ and $t$.
\end{define}

\begin{define}\label{define:apx2}
A scheduling algorithm~$\pi$ is called a \textbf{$p$-approximate scheduling algorithm to $\mathbf{R}_{\max}$} if, for any requirement $\mathbf{r}=(r_1, \cdots, r_N)$ interior of $\mathbf{R}_{\max}$,  requirement $\mathbf{r}/p=(r_1/p, \cdots, r_N/p)$  can be fulfilled by the scheduling algorithm~$\pi$. 
\end{define}

In this paper, we propose an approximate scheduling algorithm in Alg.~\ref{alg:approximate}. The procedure of Alg.~\ref{alg:approximate} is similar to that of Alg.~\ref{alg:max-weight}; hence, we point out key differences in the following. 

Unlike Alg.~\ref{alg:max-weight} solving the combinatorial optimization problem, Alg.~\ref{alg:approximate} (in Line~\ref{alg:approximate:value}) simply sorts all jobs according to the values computed by Eq.~(\ref{eq:apx}).  Let $\mathbf{J}^{(1)}(t), \cdots, \mathbf{J}^{(N)}(t)$ (in Line~\ref{alg:approximate:value}) denote the sorted jobs in frame~$t$ in  descending order of the values from Eq.~(\ref{eq:apx}). In addition, let $J^{(i)}_j(t)$    (in Line~\ref{alg:approximate:condition}) indicate if job $\mathbf{J}^{(i)}$ has a task for worker $W_j$ in frame~$t$. While the numerator of Eq.~(\ref{eq:apx}) indicates the weight $Q_i(t)\prod_{j:J_{i,j}(t)=1}P_{i,j}$ in Eq.~(\ref{eq:opt}) for job $\mathbf{J}_i(t)$,  the denominator of that reflects the maximum number of jobs interfered by job $\mathbf{J}_i(t)$.
The underlying idea of Alg.~\ref{alg:approximate} is to consider jobs in $\mathbf{J}^{(1)}(t), \cdots, \mathbf{J}^{(N)}(t)$ order, for achieving a higher value of  Eq.~(\ref{eq:opt}) and at the same time keeping the interference as low as possible.

More precisely, Alg.~\ref{alg:approximate} uses a set $\mathbf{U}$ to record (in Line~\ref{alg:approximate:u}) the \textit{available} workers that are not allocated yet, where set $U$ is initialized to be $\{W_1, \cdots, W_M\}$  in Line~\ref{alg:approximate:u-initial}. Then, at the $i$-th iteration of Line~\ref{alg:approximate:for-begin}, Alg.~\ref{alg:approximate} checks if  job $\mathbf{J}^{(i)}(t)$  satisfies the two conditions in Line~\ref{alg:approximate:condition}: the first condition $|\mathbf{J}^{(i)}(t)|\neq 0$ means that job $\mathbf{J}_i(t)$ is  generated and the second condition $\{W_j: J^{(i)}_{j}(t)=1\} \subseteq \mathbf{U}$ means that  its required workers are all available. If  job $\mathbf{J}^{(i)}(t)$ meets the conditions, then it is scheduled as in Line~$\ref{alg:approximate:d}$. In addition, if job $\mathbf{J}^{(i)}(t)$ is scheduled, then set $\mathbf{U}$ is updated as in Line~\ref{alg:approximate:u} by removing the workers allocated to job $\mathbf{J}^{(i)}(t)$. After deciding $\mathbf{D}(t)$, Alg.~\ref{alg:approximate}  performs the decision $\mathbf{D}(t)$ in Line~\ref{alg:approximate:perform} for frame~$t$, followed by updating the queue sizes in Line~\ref{alg:approximate:q}.

\begin{example}
Follow Ex.~\ref{ex:1}. According to Eq.~(\ref{eq:apx}),  Alg.~\ref{alg:approximate} calculates $\frac{Q_1(1) \prod_{j:J_{1,j}(1)=1}P_{1,j}}{\sqrt{|\mathbf{J}_1(1)|}}=\frac{0.48\cdot 0.8^2}{\sqrt{2}}=0.2172$ and $\frac{Q_2(1) \prod_{j:J_{2,j}(1)=1}P_{2,j}}{\sqrt{|\mathbf{J}_2(1)|}}=\frac{0.5\cdot 0.9^3}{\sqrt{3}}=0.2104$. Thus, Alg.~\ref{alg:approximate} decides to compute $\mathbf{J}_1(1)$ for frame~1. Note that the decision is different from that in Ex.~\ref{ex:1}. 
\end{example}

Next, we establish the approximation ratio of Alg.~\ref{alg:approximate} to Eq.~(\ref{eq:opt}). 
\begin{lemma}\label{lemma:apx}
Alg.~\ref{alg:approximate} is a $\sqrt{M}$-approximate scheduling algorithm to Eq.~(\ref{eq:opt}). 
\end{lemma}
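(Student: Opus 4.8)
The plan is to read the combinatorial problem in Line~\ref{alg:max-weight:d} of Alg.~\ref{alg:max-weight} as a \emph{weighted set packing} instance: in frame~$t$ identify each job $\mathbf{J}_i(t)$ with the worker set $S_i := \{W_j : J_{i,j}(t)=1\}$ carrying the weight $w_i := Q_i(t)\,\mathbf{1}_{|\mathbf{J}_i(t)|\neq 0}\prod_{j:J_{i,j}(t)=1}P_{i,j}$ from Eq.~(\ref{eq:opt}), and to recognize Alg.~\ref{alg:approximate} as the classical greedy rule that scans jobs in non-increasing order of $w_i/\sqrt{|S_i|}$ and keeps a job whenever its workers are still available. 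Fix a frame~$t$ and drop it from the notation. Let $\mathcal{O}$ be an interference-free set attaining $OPT(t)=\sum_{o\in\mathcal{O}}w_o$ and let $\mathcal{G}$ be the set returned by Alg.~\ref{alg:approximate}, so $APX(t;\pi)=\sum_{g\in\mathcal{G}}w_g$. Jobs of zero weight may be deleted from $\mathcal{O}$ without changing $OPT(t)$, so every job of $\mathcal{O}$ is generated with $|S_o|\ge1$; and the degenerate case $OPT(t)=0$ is disposed of by observing that then every generated job has weight $0$, hence $APX(t;\pi)=0$ as well.

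First I would construct a charging map $\phi:\mathcal{O}\to\mathcal{G}$. If $o\in\mathcal{G}$, set $\phi(o):=o$. If $o\notin\mathcal{G}$, then because $o$ is generated the only way it can have been skipped at Line~\ref{alg:approximate:condition} is that $S_o\not\subseteq\mathbf{U}$ at the iteration examining $o$; so some worker of $o$ had already been removed from $\mathbf{U}$ at Line~\ref{alg:approximate:u} by an earlier-selected job, which I take to be $\phi(o)$ (any one, if several). Then $\phi(o)\in\mathcal{G}$, $S_{\phi(o)}\cap S_o\neq\emptyset$, and $\phi(o)$ is not ranked after $o$ in Line~\ref{alg:approximate:value}, so $w_{\phi(o)}/\sqrt{|S_{\phi(o)}|}\ge w_o/\sqrt{|S_o|}$, i.e.
\begin{align}
w_o \;\le\; \sqrt{\frac{|S_o|}{|S_{\phi(o)}|}}\;w_{\phi(o)}. \label{eq:plan-sorted}
\end{align}

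Next I would bound the preimages. Fix $g\in\mathcal{G}$ and set $\mathcal{O}_g:=\phi^{-1}(g)$. Every $o\in\mathcal{O}_g$ has $S_o\cap S_g\neq\emptyset$, and the sets $\{S_o:o\in\mathcal{O}_g\}$ are pairwise disjoint because $\mathcal{O}_g\subseteq\mathcal{O}$ is interference-free; sending each $o$ to a worker in $S_o\cap S_g$ is therefore an injection $\mathcal{O}_g\hookrightarrow S_g$, so $|\mathcal{O}_g|\le|S_g|$, while the disjoint sets $S_o$ sitting inside the $M$-worker pool give $\sum_{o\in\mathcal{O}_g}|S_o|\le M$. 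Summing \eqref{eq:plan-sorted} over $\mathcal{O}_g$ and applying Cauchy--Schwarz,
\begin{align}
\sum_{o\in\mathcal{O}_g} w_o
&\;\le\; \frac{w_g}{\sqrt{|S_g|}}\sum_{o\in\mathcal{O}_g}\sqrt{|S_o|} \nonumber\\
&\;\le\; \frac{w_g}{\sqrt{|S_g|}}\,\sqrt{|\mathcal{O}_g|}\;\sqrt{\textstyle\sum_{o\in\mathcal{O}_g}|S_o|}
\;\le\; \frac{w_g}{\sqrt{|S_g|}}\,\sqrt{|S_g|}\,\sqrt{M}
\;=\; \sqrt{M}\,w_g. \label{eq:plan-fibre}
\end{align}
Since $\phi$ is defined on all of $\mathcal{O}$, summing \eqref{eq:plan-fibre} over $g\in\mathcal{G}$ yields $OPT(t)=\sum_{o\in\mathcal{O}}w_o\le\sqrt{M}\sum_{g\in\mathcal{G}}w_g=\sqrt{M}\,APX(t;\pi)$, so $OPT(t)/APX(t;\pi)\le\sqrt{M}$; the argument used nothing specific to $t$ or to $\mathbf{Q}(t)$, so it holds for all of them, which is the claim.

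The one genuinely delicate point --- the ``hard part'' --- is the pair of counting bounds feeding \eqref{eq:plan-fibre}, namely $|\mathcal{O}_g|\le|S_g|$ and $\sum_{o\in\mathcal{O}_g}|S_o|\le M$. Both come from the single structural observation that the jobs charged to a common $g$ form an interference-free family each of which meets $S_g$, but one must phrase disjointness (no two of them share a worker) carefully to extract both the injection into $S_g$ and the packing into the $M$-element pool. The other ingredients --- well-definedness of $\phi$ from the failed membership test, the sorted-order inequality \eqref{eq:plan-sorted}, and the Cauchy--Schwarz step --- are routine. I would also append a one-line tightness remark --- one job needing all $M$ workers with per-worker weight just below that of $M$ competing singletons drives the ratio toward $\sqrt{M}$ --- to explain why the $1/\sqrt{M}$ shrinkage in Definition~\ref{define:apx2} is the natural target.
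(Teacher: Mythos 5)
Your proof is correct and follows essentially the same route as the paper's: charge each optimal job to the greedy job that blocked it (the paper's sets $\mathscr{C}_i$ play the role of your preimages $\phi^{-1}(g)$, as a covering rather than a partition), then combine the sorted-order inequality with the two counting bounds $|\mathcal{O}_g|\le|S_g|$ and $\sum_{o}|S_o|\le M$ via Cauchy--Schwarz. No substantive difference beyond bookkeeping.
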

\begin{proof}
See Appendix~\ref{appendix:lemma:apx}.
\end{proof}
\begin{remark}
We remark that the approximation ratio of $\sqrt{M}$ is the best approximation ratio to Eq.~(\ref{eq:opt}). That is because the combinatorial optimization problem in  Alg.~\ref{alg:max-weight} is computationally harder than the set packing problem (see Lemma~\ref{lemma:np}) and the best approximation ratio to the set packing problem is the square root (see \cite{halldorsson1998independent}).
\end{remark}

With  Lemma~\ref{lemma:apx}, we can further establish the approximation ratio of Alg.~\ref{alg:approximate} to $\mathbf{R}_{\max}$. 
\begin{theorem}\label{theorem:approximation-ratio}
Alg.~\ref{alg:approximate} is a $\sqrt{M}$-approximate scheduling algorithm to $\mathbf{R}_{\max}$. 
\end{theorem}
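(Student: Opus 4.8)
The plan is to mirror the argument behind Theorem~\ref{theorem:optimal}, but run the Lyapunov drift computation with Alg.~\ref{alg:approximate} in place of Alg.~\ref{alg:max-weight}, and exploit Lemma~\ref{lemma:apx} at the one point where the two algorithms differ. Fix a requirement $\mathbf{r}=(r_1,\dots,r_N)$ interior of $\mathbf{R}_{\max}$; by Definition~\ref{define:apx2} it suffices to show that $\mathbf{r}/\sqrt{M}$ is fulfilled by Alg.~\ref{alg:approximate}, and by Propositions~\ref{proposition:arrival-service} and~\ref{proposition:stable} it is enough to show that, when the virtual queueing network is driven by arrival rates $r_i/\sqrt{M}$ and serviced by Alg.~\ref{alg:approximate}, every queue $Q_i$ is stable. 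Consider the quadratic Lyapunov function $L(\mathbf{Q}(t))=\tfrac12\sum_{i=1}^N Q_i(t)^2$ and bound the one-frame conditional drift in the standard way: using $Q_i(t+1)\le \max\{Q_i(t)+r_i/\sqrt{M}-e_i(t;\pi),0\}$ one gets, for a finite constant $B$ depending only on $N$ and $\max_i r_i$,
\begin{align}
E[L(\mathbf{Q}(t+1))-L(\mathbf{Q}(t))\mid \mathbf{Q}(t)] \le B + \sum_{i=1}^N Q_i(t)\frac{r_i}{\sqrt{M}} - \sum_{i=1}^N Q_i(t)\, E[e_i(t;\pi)\mid \mathbf{Q}(t)].
\end{align}

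The key observation is that $\sum_{i=1}^N Q_i(t)\,E[e_i(t;\pi)\mid \mathbf{Q}(t)]$ is exactly the expected value of the objective in Eq.~(\ref{eq:opt}) evaluated at the decision produced by the algorithm in question: for a fixed realization of the state $(\mathbf{J}_1(t),\dots,\mathbf{J}_N(t))$, selecting the interference-free set $\mathbf{D}(t)$ yields weighted sum $\sum_{i:\mathbf{J}_i(t)\in\mathbf{D}(t)}Q_i(t)\big(\mathbf{1}_{|\mathbf{J}_i(t)|\ne0}\prod_{j:J_{i,j}(t)=1}P_{i,j}\big)$, and taking expectation over task completions turns this into $\sum_i Q_i(t)E[e_i(t;\pi)\mid\mathbf{Q}(t), \text{state}]$. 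Hence, in the notation of Definition~\ref{define:apx1}, the term subtracted for Alg.~\ref{alg:approximate} equals $E_{\text{state}}[APX(t;\text{Alg.~\ref{alg:approximate}})]$ while the term that Alg.~\ref{alg:max-weight} would subtract is $E_{\text{state}}[OPT(t)]$. By Lemma~\ref{lemma:apx}, $APX(t;\text{Alg.~\ref{alg:approximate}}) \ge OPT(t)/\sqrt{M}$ for every realization, so
\begin{align}
E[L(\mathbf{Q}(t+1))-L(\mathbf{Q}(t))\mid \mathbf{Q}(t)] \le B + \sum_{i=1}^N Q_i(t)\frac{r_i}{\sqrt{M}} - \frac{1}{\sqrt{M}}\,E_{\text{state}}\big[OPT(t)\mid \mathbf{Q}(t)\big].
\end{align}

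Now I bring in feasibility of $\mathbf{r}$. Since $\mathbf{r}$ is interior of $\mathbf{R}_{\max}$, there is $\epsilon>0$ and a (possibly randomized, stationary) scheduling algorithm $\tilde\pi$ achieving $\mathscr{N}_i(\tilde\pi)\ge r_i+\epsilon$ for all $i$; because the state is i.i.d., a standard averaging/ergodic argument (as in \cite[Chapter 4]{neely2010stochastic}) produces a stationary randomized rule whose per-frame expected service vector dominates $\mathbf{r}+\epsilon\mathbf{1}$ componentwise. Feeding that rule's decisions as a candidate into Eq.~(\ref{eq:opt}) shows $E_{\text{state}}[OPT(t)\mid\mathbf{Q}(t)] \ge \sum_i Q_i(t)(r_i+\epsilon)$, since $OPT(t)$ is the maximum over all interference-free decisions. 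Substituting, the $\sum_i Q_i(t) r_i/\sqrt{M}$ terms cancel and the drift is at most $B - (\epsilon/\sqrt{M})\sum_i Q_i(t)$, a negative-drift condition; invoking the Lyapunov drift theorem \cite[Chapter 4]{neely2010stochastic} gives $\limsup_{T\to\infty}\frac1T\sum_{t=1}^T\sum_i E[Q_i(t)] \le B\sqrt{M}/\epsilon < \infty$, so every $Q_i$ is stable, which is what we needed.

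The main obstacle is the step connecting the pointwise guarantee of Lemma~\ref{lemma:apx} to the drift inequality cleanly: Lemma~\ref{lemma:apx} compares $OPT(t)$ and $APX(t;\pi)$ for a \emph{fixed} $\mathbf{Q}(t)$, but here $OPT(t)$ and $APX(t)$ are themselves random through the state $(\mathbf{J}_i(t))$, and one must be careful that the $\sqrt{M}$ factor survives (i) the expectation over the state and (ii) the subsequent expectation over task completions. The cleanest route is to fix the state, apply Lemma~\ref{lemma:apx} to the deterministic weighted set-packing instance with weights $Q_i(t)\prod_{j:J_{i,j}(t)=1}P_{i,j}$, then take expectations — but this requires checking that Lemma~\ref{lemma:apx}'s proof really does give the bound realization-by-realization in the state (which it does, since Alg.~\ref{alg:approximate}'s sorting key Eq.~(\ref{eq:apx}) and the set $\mathbf{U}$ update only depend on the realized $\mathbf{J}_i(t)$ and $\mathbf{Q}(t)$), and that the completion-probability expectation passes through linearly, which it does because $E[e_i(t;\pi)\mid \text{state}, \mathbf{Q}(t), \mathbf{D}(t)]$ is precisely the multiplicative weight already sitting in Eq.~(\ref{eq:opt}). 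A secondary technical point is justifying the existence of the stationary randomized comparison policy from mere membership of $\mathbf{r}+\epsilon\mathbf{1}$ in $\mathbf{R}_{\max}$; this is the same reduction used implicitly in the proof of Theorem~\ref{theorem:optimal} and can be cited from \cite[Chapter 4]{neely2010stochastic} rather than reproved.
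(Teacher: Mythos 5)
Your proposal is correct and follows essentially the same route as the paper's own proof: a quadratic Lyapunov drift bound for Alg.~\ref{alg:approximate} under arrivals $r_i/\sqrt{M}$, the pointwise use of Lemma~\ref{lemma:apx} to replace the service term by $\frac{1}{\sqrt{M}}$ times the optimal weighted sum, and a comparison with a stationary policy fulfilling $\mathbf{r}+\epsilon$ (the paper's Lemma~\ref{lemma:stationary}) to obtain negative drift. The points you flag as obstacles (applying Lemma~\ref{lemma:apx} realization-by-realization and invoking the stationary comparison policy) are handled in the paper exactly as you suggest, via steps (b)--(e) of its drift chain and a citation to \cite[Chapter 4]{neely2010stochastic}.
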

\begin{proof}
See Appendix~\ref{appendix:theorem:approximation-ratio}.
\end{proof}

The computational complexity of Alg.~\ref{alg:approximate} is $O(\log N)$ primarily caused by sorting all queues in Line~\ref{alg:approximate:value}. Thus, Alg.~\ref{alg:approximate} is tractable when the number of applications is large.

\begin{remark}\label{remark:time-varying-work}
	We  remark that our methodology can apply to the case of time-varying workloads. Let $L_{i,j}(t)$ be the workload generated by application $A_i$ for worker $W_j$ in frame~$t$. We just need to revise the constant task completion probability $P_{i,j}$ in Algs.~\ref{alg:max-weight} and \ref{alg:approximate} to be the probability of completing workload $L_{i,j}(t)$. If workload $L_{i,j}(t)$ is i.i.d. over frames~$t$ for all $i$ and~$j$, then Alg.~\ref{alg:max-weight} is still a feasibility-optimal scheduling algorithm and Alg.~\ref{alg:approximate} is still a $\sqrt{M}$-approximate scheduling algorithm. 
\end{remark}

\section{Numerical results} \label{section:simulations}

\begin{figure}[t]
\centering
\includegraphics[width=.48\textwidth]{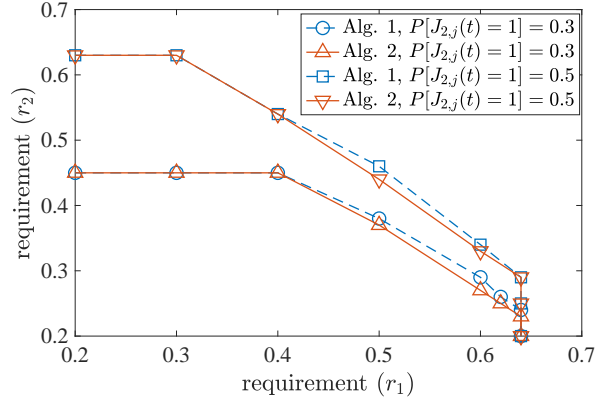}
\caption{Regions of achievable requirements by Algs.~\ref{alg:max-weight} and \ref{alg:approximate} for various task generation probabilities by application $A_2$, i.e., $P[J_{2,j}(t)=1]=0.3$ or $P[J_{2,j}(t)=1]=0.5$ for all $j$ and $t$.}
\label{fig:result12}	
\end{figure}
\begin{figure}[t]
	\centering
	\includegraphics[width=.48\textwidth]{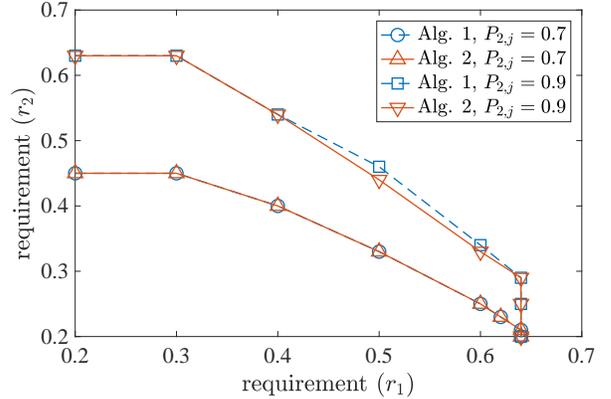}
	\caption{Regions of achievable requirements by Algs.~\ref{alg:max-weight} and \ref{alg:approximate} for various task completion probabilities by worker $W_2$, i.e., $P_{2,j}=0.7$ or $P_{2,j}=0.9$ for all $j$.}
	\label{fig:result45}	
\end{figure}
\begin{figure}[t]
	\centering
	\includegraphics[width=.48\textwidth]{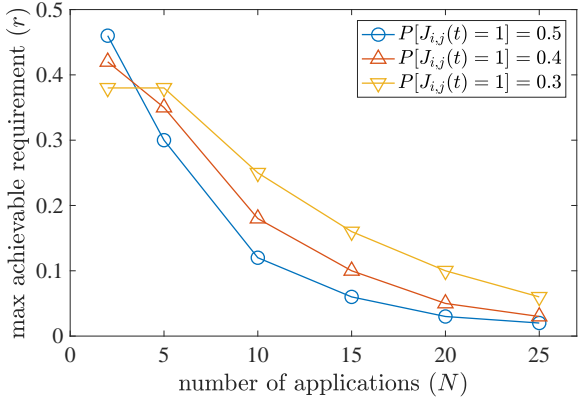}
	\caption{Maximum achievable requirement (for the case of $r_i=r$ for all $i$) versus  number of applications, for various task generation probabilities, i.e., $P[J_{i,j}(t)=1]=0.3$, $P[J_{i,j}(t)=1]=0.4$, or $P[J_{i,j}(t)=1]=0.5$, for all $i$, $j$, and~$t$.}
	\label{fig:result-large}	
\end{figure}

In this section, we investigate Algs.~\ref{alg:max-weight} and~\ref{alg:approximate} via computer simulations. First, we consider two applications and two workers. Fig.~\ref{fig:result12} displays the regions of achievable requirements  by both scheduling algorithms for various task generation probabilities by application $A_2$, when  $P[J_{1,j}(t)=1]=0.5$ and $P_{i,j}=0.9$ for all $i$, $j$, and $t$ are fixed. Fig.~\ref{fig:result45} displays the  regions of achievable requirements by both scheduling algorithms for various task completion probabilities by worker~$W_2$, when  $P[J_{i,j}(t)=1]=0.5$ and $P_{1,j}=0.9$ for all $i$, $j$, and~$t$ are fixed.  Each result $(r_1, r_2)$ marked in Figs.~\ref{fig:result12} or \ref{fig:result45} is the requirement such that the average number of completed jobs in 10,000 frames for application $A_1$ is at least  $r_1-0.01$ and that for application $A_2$ is  at least $r_2-0.01$. The both figures reflect that Alg.~\ref{alg:approximate} is not only computationally efficient but also can fulfill almost all requirements $\mathbf{r}$ within $\mathbf{R}_{\max}$  (achievable by Alg.~\ref{alg:max-weight}). 

Second, we consider more applications and more  workers with the same quantities, i.e.,  $N=M$. Moreover, all task completion probabilities are fixed to be 0.9, i.e., $P_{i,j}=0.9$ for all $i$ and $j$. Then, Fig.~\ref{fig:result-large} displays the maximum achievable requirements~$r$ (for the case of $r_i=r$ for all $i$) by Alg.~\ref{alg:approximate}, when all task generation probabilities are  the same.  In this case, an application generates a job in a frame with probability \mbox{$1-(1-P[J_{1,1}(1)=1])^N$}.  When $N=2$ in Fig.~\ref{fig:result-large}, the lower task generation probability the lower achievable requirement, because a lower task generation probability  generates fewer jobs.   
In contrast, when $N\geq5$ in Fig.~\ref{fig:result-large}, the lower task generation probability the higher achievable requirement, because fewer jobs  cause less interference. In other words, the interference becomes severe when $N\geq5$. Moreover, from Fig.~\ref{fig:result-large}, the maximum achievable requirement by Alg.~\ref{alg:approximate} appears to decrease super-linearly with the number of applications.


\section{Concluding remarks} \label{section:conclusion}
In this paper, we provided a framework for studying stochastic real-time jobs in unreliable workers with specialized functions. In particular, we developed two  algorithms for scheduling  real-time  jobs in  shared unreliable workers. While the proposed feasibility-optimal scheduling algorithm can support the largest region of applications' requirements, it has the notorious NP-hard issue. In contrast, the proposed approximate scheduling algorithm is not only simple, but also has a provable guarantee for the region of achievable requirements.  Moreover, we note that coding techniques have been exploited to alleviate  \textit{stragglers}  in distributed computing networks, e.g., \cite{lee2017speeding,yang2019timely}. Including coding design into our framework is promising.

\appendices

\section{Proof of Lemma~\ref{lemma:np}}\label{appendix:lemma:np}
We show a reduction from the set packing problem \cite{halldorsson1998independent}, where given a collection $\{\Psi_1, \cdots,  \Psi_n\}$ of \textit{non-empty} sets over a universal set $\{1, 2, \cdots, m\}$ for some positive integers $m$ and~$n$, the objective  is to identify a sub-collection  of disjoint sets in that collection such that the number of sets in the sub-collection  is maximized.

For the given instance of the set packing problem, we construct $n$ applications and $m$ workers in the  distributed computing network. Consider a fixed frame~$t$. In frame~$t$, application $A_i$ generates job $\Psi_i$. With the transformation, the set packing problem is equivalent to identifying a set of interference-free jobs in frame~$t$ such that number of jobs in that set is maximized.

Moreover, consider no job until frame~$t-1$,  identical requirements $r_i=r$ for all $i$, and identical task completion probabilities $P_{i,j}=1$ for all $i$ and $j$.  In this context, Eq.~(\ref{eq:opt}) in frame~$t$ becomes 
\begin{align}
\sum_{i: \mathbf{J}_i(t) \in \mathbf{D}(t)} r \cdot t,  \label{eq:opt2}
\end{align} 
because $Q_i(t)=r \cdot t$, $\mathbf{1}_{|\mathbf{J}_i(t)|\neq0}=1$ (due to non-empty sets $\Psi_i$ for all $i$), and  $ \prod_{j:J_{i,j}(t)=1}P_{i,j}=1$. As a result  of the constant  $r \cdot t$  in Eq.~(\ref{eq:opt2}), the objective of the combinatorial optimization problem in Alg.~\ref{alg:max-weight} in frame~$t$ becomes identifying a set of interference-free jobs such that the number of jobs in that set is maximized. 

Suppose there exists an algorithm such that the combinatorial optimization problem in Alg.~\ref{alg:max-weight} in frame~$t$ can be solved in polynomial time. Then,  the polynomial-time algorithm can identify a set $\mathbf{D}(t)$ for maximizing the value in Eq~(\ref{eq:opt2}); in turn, solves the set packing problem. That  contradicts to the NP-hardness of the set packing problem. 

Because the above argument is true for all frames~$t$, we conclude that the combinatorial optimization problem in Alg.~\ref{alg:max-weight} in frame~$t$ is NP-hard, for all~$t$.

\section{Proof of Lemma~\ref{lemma:apx}} \label{appendix:lemma:apx}
Consider a fixed queue size vector $\mathbf{Q}(t)$ in a fixed frame~$t$. Let $V_i(t)=Q_i(t) \prod_{j:J_{i,j}(t)=1}P_{i,j}$ for all $i=1, \cdots, N$. Without loss of generality, we can assume that $|\mathbf{J}_i(t)|\neq 0$ for all $i$ and further assume that $V_1(t)/\sqrt{|\mathbf{J}_1(t)|} \geq \cdots \geq V_N(t)/|\sqrt{|\mathbf{J}_N(t)|}$ (by reordering the job indices), i.e., Alg.~\ref{alg:approximate} processes job $\mathbf{J}_i(t)$  at the $i$-th iteration of Line~\ref{alg:approximate:for-begin}. Let $\mathbf{D}_2(t)$ be the decision of Alg.~\ref{alg:approximate} in frame~$t$ for the given queue size vector~$\mathbf{Q}(t)$. Then, we can express the value of Eq.~(\ref{eq:opt}) computed by Alg.~\ref{alg:approximate} as
\begin{align}
APX(t;\text{Alg.~\ref{alg:approximate}})=\sum_{i: \mathbf{J}_i(t) \in \mathbf{D}_2(t)} V_i(t). \label{eq:apx2}
\end{align}
 
Let $\mathbf{D}_1(t)$ be the decision of Alg.~\ref{alg:max-weight} in frame~$t$  for the given queue size vector $\mathbf{Q}(t)$. If  the conditions in Line~\ref{alg:approximate:condition} of Alg.~\ref{alg:approximate}  hold  for the  $i$-th iteration (i.e., $\mathbf{J}_i(t) \in \mathbf{D}_2(t)$), then we  let $\mathscr{C}_i=\{\mathbf{J}_k(t) \in \mathbf{D}_1(t): k\geq i, \mathbf{J}_{k}(t) \bigcap \mathbf{J}_i(t) \neq \emptyset\}$\footnote{Here, we use $\mathbf{J}_{k}(t) \bigcap \mathbf{J}_i(t)$ to represent the set of common workers for jobs $\mathbf{J}_k(t)$ and $\mathbf{J}_i(t)$.} be a set of jobs. The set $\mathscr{C}_i$ has the following properties: 
\begin{itemize}
	\item For  job $\mathbf{J}_k(t) \in \mathscr{C}_i$, we have
	\begin{align}
	\frac{V_k(t)}{\sqrt{|\mathbf{J}_k(t)|}} \leq \frac{V_i(t)}{\sqrt{|\mathbf{J}_i(t)|}},\label{eq:property1}
	\end{align}
	since $k \geq i$. 
	\item All jobs in $\mathscr{C}_i$ are interference-free,  i.e., they need different workers, since $\mathbf{J}_k(t) \in \mathbf{D}_1(t)$. Moreover, job $\mathbf{J}_k(t) \in \mathscr{C}_i$  needs at least one of the workers for $\mathbf{J}_i(t)$  (i.e., $\mathbf{J}_k(t) \bigcap \mathbf{J}_i(t) \neq \emptyset$). Thus, we have
	\begin{align}
	| \mathscr{C}_i|  \leq |\mathbf{J}_i(t)|. \label{eq:property2}
	\end{align}

	\item  	Since all jobs in $\mathscr{C}_i$ need different workers, and   there are $M$ workers, we have
	\begin{align}
	\sum_{k: \mathbf{J}_k(t) \in \mathscr{C}_i} |\mathbf{J}_k(t)| \leq M. \label{eq:property3}
	\end{align}

\end{itemize}

Note that $\mathbf{D}_1(t) \subseteq \bigcup_{i:\mathbf{J}_i(t) \in \mathbf{D}_2(t)} \mathscr{C}_i$. Thus, we can bound $OPT(t)$ computed by Alg.~\ref{alg:max-weight} by 
\begin{align}
OPT(t)\leq \sum_{i:\mathbf{J}_i(t) \in \mathbf{D}_2(t)} OPT_i, \label{eq:opt3}
\end{align}
where $OPT_i(t)=\sum_{k: \mathbf{J}_k(t) \in \mathscr{C}_i} V_k(t)$ for all $i$.

Furthermore, we can bound $OPT_i(t)$ for each $i$ by
\begin{align}
OPT_i(t) \mathop{\leq}^{(a)}&  \frac{V_i(t)}{\sqrt{|\mathbf{J}_i(t)|}}  \sum_{k: \mathbf{J}_k(t) \in \mathscr{C}_i}\sqrt{ |\mathbf{J}_k(t)|} \nonumber\\
\mathop{\leq}^{(b)}&\frac{V_i(t)}{\sqrt{|\mathbf{J}_i(t)|}} \sqrt{| \mathscr{C}_i|} \sqrt{\sum_{k: \mathbf{J}_k(t) \in \mathscr{C}_i} |\mathbf{J}_k(t)|}\nonumber\\
\mathop{\leq}^{(c)}& V_i(t) \sqrt{M}, \label{eq:ineq}
\end{align}
where (a) follows Eq.~(\ref{eq:property1}); (b) is due to the Cauchy-Schwarz inequality; (c) follows Eqs.~(\ref{eq:property2}) and (\ref{eq:property3}). 

Then, we can bound $OPT(t)$  by
\begin{align*}
OPT(t)\mathop{\leq}^{(a)}& \sum_{i:\mathbf{J}_i(t) \in \mathbf{D}_2(t)} OPT_i(t)\\
\mathop{\leq}^{(b)}& \sum_{i:\mathbf{J}_i(t) \in \mathbf{D}_2(t)} V_i(t) \sqrt{M} \\
\mathop{\leq}^{(c)}& \sqrt{M}\cdot APX(t;\text{Alg.~\ref{alg:approximate}}), 
\end{align*}
where (a) follows Eq.~(\ref{eq:opt3}); (b) follows Eq.~(\ref{eq:ineq}); (c) follows Eq.~(\ref{eq:apx2}). Because the above argument is true for all $\mathbf{Q}(t)$ and~$t$, the approximation ratio is $\sqrt{M}$.

\section{Proof of Theorem~\ref{theorem:approximation-ratio}}\label{appendix:theorem:approximation-ratio}
The proof of  Theorem~\ref{theorem:approximation-ratio}  needs the following technical lemma,  whose proof follows the line of \cite[Appendix 4.A]{neely2010stochastic} along with the i.i.d. property of  state $(\mathbf{J}_1(t), \cdots, \mathbf{J}_N(t))$ (as discussed in the proof of Theorem~\ref{theorem:optimal}) and the constant task completion probabilities $P_{i,j}$.
\begin{lemma} \label{lemma:stationary}
There exists a  \textit{stationary}  scheduling algorithm (i.e., decision $\mathbf{D}(t)$ depends on the state in frame~$t$ only) such that, for any  requirement~$\mathbf{r}$ interior of $\mathbf{R}_{\max}$,  all  queues in the virtual queueing network are stable, i.e., the stationary scheduling algorithm can fulfill the requirement~$\mathbf{r}$.
\end{lemma}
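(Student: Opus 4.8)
The plan is to prove Lemma~\ref{lemma:stationary} by exhibiting a randomized stationary scheduling algorithm that dominates, in the sense of expected per-frame packet removal, the given interior requirement~$\mathbf{r}$, and then invoking the Lyapunov-drift machinery of \cite[Chapter~4 and Appendix~4.A]{neely2010stochastic}. First I would recall that, by definition of $\mathbf{R}_{\max}$ and feasibility, for a requirement $\mathbf{r}$ interior of $\mathbf{R}_{\max}$ there is an $\epsilon>0$ with $\mathbf{r}+\epsilon\in\mathbf{R}_{\max}$; hence some (possibly history-dependent, possibly randomized) scheduling algorithm $\pi'$ achieves $\mathscr{N}_i(\pi')\ge r_i+\epsilon$ for all $i$. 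The goal of the first step is to extract from $\pi'$ a \emph{stationary randomized} policy with the same guarantee.

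The key steps, in order, are: (i) Define the finite state space $\mathcal{S}$ of vectors $\mathbf{s}=(\mathbf{J}_1,\dots,\mathbf{J}_N)$, which is finite since each $J_{i,j}\in\{0,1\}$, and let $\mu_{\mathbf{s}}$ be its (i.i.d., time-invariant) probability. For each fixed state $\mathbf{s}$, the set of admissible decisions $\mathcal{D}(\mathbf{s})$ (interference-free subsets of the generated jobs) is finite. A stationary randomized policy is then a choice, for each $\mathbf{s}$, of a probability distribution over $\mathcal{D}(\mathbf{s})$; under such a policy the expected number of packets removed from $Q_i$ per frame equals a fixed constant $\bar e_i=\sum_{\mathbf{s}}\mu_{\mathbf{s}}\sum_{\mathbf{d}\in\mathcal{D}(\mathbf{s})}\Pr[\mathbf{d}\mid\mathbf{s}]\,\mathbf{1}_{|\mathbf{J}_i|\neq0}\prod_{j:J_{i,j}=1}P_{i,j}$. (ii) Show that the set of achievable vectors $(\bar e_1,\dots,\bar e_N)$ over all stationary randomized policies is exactly the closure of the set of long-run service-rate vectors achievable by \emph{any} policy; this is the standard ergodic/time-averaging argument (average the empirical conditional action frequencies of $\pi'$ over a long horizon and pass to a convergent subsequence, using finiteness of $\mathcal{S}$ and of the action sets), and it yields a stationary randomized policy $\pi^*$ with $\bar e_i(\pi^*)\ge r_i+\epsilon/2$ for all $i$. (iii) With $\pi^*$ in hand, apply the Lyapunov argument: take $L(\mathbf{Q})=\tfrac12\sum_i Q_i^2$, bound the one-frame conditional drift of $L$ under $\pi^*$ using the queueing recursion $Q_i(t+1)=\max\{Q_i(t)+r_i-e_i(t),0\}$ and the inequality $(\max\{a,0\})^2\le a^2$, obtaining $\mathbb{E}[L(\mathbf{Q}(t+1))-L(\mathbf{Q}(t))\mid\mathbf{Q}(t)]\le B-\sum_i Q_i(t)\,(\bar e_i(\pi^*)-r_i)\le B-(\epsilon/2)\sum_i Q_i(t)$ for a finite constant $B$ (finite because arrivals $r_i$ and services are bounded). (iv) Telescope over $t=1,\dots,T$, divide by $T$, and take $\limsup$ to conclude $\limsup_T \frac1T\sum_{t=1}^T\sum_i\mathbb{E}[Q_i(t)]\le 2B/\epsilon<\infty$, i.e., every queue is stable; then Propositions~\ref{proposition:arrival-service} and~\ref{proposition:stable} give $\mathscr{N}_i(\pi^*)\ge r_i$, so $\pi^*$ fulfills~$\mathbf{r}$.

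The main obstacle is step (ii): making precise that an arbitrary (history-dependent, randomized) algorithm achieving a given long-run service-rate vector can be matched by a stationary randomized one. The clean way is to average the action-frequency measures: for a horizon $T$, define $\phi^{(T)}_{\mathbf{s}}(\mathbf{d})=\frac1T\sum_{t=1}^T\Pr[\text{state}=\mathbf{s},\text{decision}=\mathbf{d} \text{ in frame }t]$, normalize by $\mu_{\mathbf{s}}$ to get a (sub)stochastic kernel, extract a convergent subsequence by compactness (the kernel lives in a product of finite-dimensional simplices), and check that the limit kernel, used as a stationary policy, produces service rates at least as large as the $\liminf$ service rates of the original policy — here one uses that $e_i(t)$ is a bounded deterministic function of $(\text{state},\text{decision})$, so its time-averaged expectation is continuous in $\phi^{(T)}$. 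I would note that this is exactly the content of \cite[Appendix~4.A]{neely2010stochastic} specialized to our finite-state i.i.d. model with the constant completion probabilities $P_{i,j}$ entering only through the bounded reward function, so the argument transfers verbatim; the remaining steps (iii)--(iv) are the textbook Lyapunov-drift computation and require no new ideas beyond verifying the boundedness of $B$.
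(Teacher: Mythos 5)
Your proposal is correct and follows essentially the same route as the paper, whose ``proof'' of this lemma is simply an appeal to the argument of \cite[Appendix 4.A]{neely2010stochastic} combined with the i.i.d.\ state $(\mathbf{J}_1(t),\cdots,\mathbf{J}_N(t))$ and the constant probabilities $P_{i,j}$. Your steps (i)--(ii) are a faithful expansion of that cited time-averaging/compactness construction of a stationary randomized policy, and steps (iii)--(iv) are the standard drift argument the paper likewise delegates to Neely, so nothing beyond the paper's intended argument is needed.
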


Moreover, we need the Lyapunov theory \cite[Thoereom 4.1]{neely2010stochastic} as stated in the following lemma, where we consider the Lyapunov function  $L(\mathbf{Q}(t))=\sum_{i=1}^N Q^2_i(t)$. 
\begin{lemma} \label{lemma:lyapunov}
Given scheduling algorithm~$\pi$ and requirement~$\mathbf{r}$, if there exist constants $B>0$ and $\epsilon>0$ such that
\begin{align*}
E[L(\mathbf{Q}(t+1))-L(\mathbf{Q}(t))|\mathbf{Q}(t)] \leq B-\epsilon\sum_{i=1}^N Q_i(t),
\end{align*} 		
for all frames~$t$, then all  queues in the virtual queueing network are stable, i.e., the scheduling algorithm~$\pi$ can fulfill the requirement $\mathbf{r}$. 
\end{lemma}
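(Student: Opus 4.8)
The plan is to establish this standard quadratic-Lyapunov drift lemma by a telescoping argument that converts the per-frame conditional drift bound into a uniform bound on the long-run time-average queue length. First I would remove the conditioning: taking the expectation of the hypothesized inequality over the law of $\mathbf{Q}(t)$ and invoking the tower property of conditional expectation yields
\[
E[L(\mathbf{Q}(t+1))] - E[L(\mathbf{Q}(t))] \leq B - \epsilon \sum_{i=1}^N E[Q_i(t)]
\]
for every frame $t$. To justify this step I would first record that $E[L(\mathbf{Q}(t))]$ is finite at each fixed $t$: the queueing dynamics give $Q_i(t+1) \leq Q_i(t) + r_i$, so $Q_i(t) \leq Q_i(1) + (t-1)\max_i r_i$ deterministically, whence $L(\mathbf{Q}(t)) = \sum_{i=1}^N Q_i^2(t)$ is bounded by a finite deterministic quantity at each $t$.

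Next I would sum the unconditional drift inequality over $t = 1, \ldots, T$. The left-hand side telescopes to $E[L(\mathbf{Q}(T+1))] - E[L(\mathbf{Q}(1))]$, and since $L \geq 0$ I may drop the nonnegative terminal term to obtain
\[
\epsilon \sum_{t=1}^T \sum_{i=1}^N E[Q_i(t)] \leq BT + E[L(\mathbf{Q}(1))].
\]
Dividing by $\epsilon T$ and letting $T \to \infty$, the initial-condition term $E[L(\mathbf{Q}(1))]/(\epsilon T)$ vanishes (indeed it is zero here, since the algorithms set $Q_i(1)=0$), giving
\[
\limsup_{T\to\infty} \frac{1}{T} \sum_{t=1}^T \sum_{i=1}^N E[Q_i(t)] \leq \frac{B}{\epsilon} < \infty.
\]
Because each $Q_i(t)$ is nonnegative, each individual time-average is dominated by the aggregate one, so $\limsup_{T\to\infty}\frac{1}{T}\sum_{t=1}^T E[Q_i(t)] \leq B/\epsilon$ for every $i$; by Definition~\ref{define:stable} every queue $Q_i$ is therefore stable.

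Finally I would convert stability into the asserted feasibility conclusion. By Proposition~\ref{proposition:arrival-service} the arrival and service rates of $Q_i$ are $r_i$ and $\mathscr{N}_i(\pi)$, and by Proposition~\ref{proposition:stable} stability forces $\mathscr{N}_i(\pi) \geq r_i$ for all $i$, which is exactly the statement that $\pi$ fulfills $\mathbf{r}$. I expect the whole argument to be routine; the only place demanding care is the finiteness of $E[L(\mathbf{Q}(t))]$ at each frame, which is what legitimizes both the tower-property step and the telescoping sum, together with the observation that the initial term is asymptotically negligible after normalizing by $T$.
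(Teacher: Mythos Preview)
Your argument is correct and is exactly the standard telescoping proof of the Foster--Lyapunov drift criterion. The paper itself does not give a proof of this lemma; it simply cites \cite[Theorem~4.1]{neely2010stochastic}, and the proof you have written is precisely the argument found there, so your proposal matches the paper's approach.
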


Then, we are ready to prove Theorem~\ref{theorem:approximation-ratio}. Suppose that requirement $\mathbf{r}=(r_1, \cdots, r_N)$ is interior of $\mathbf{R}_{\max}$. By  Lemma~\ref{lemma:stationary}, there exists a stationary scheduling algorithm that can fulfill the requirement~$\mathbf{r}$. We denote that stationary scheduling algorithm by~$\pi_{s}$. Moreover, since requirement $\mathbf{r}$ is interior of $\mathbf{R}_{\max}$, requirement $\mathbf{r}+\epsilon$ for some $\epsilon>0$ is also interior of $\mathbf{R}_{\max}$. By Lemma~\ref{lemma:stationary} again, the stationary scheduling algorithm~$\pi_{s}$ can fulfill requirement~$\mathbf{r}+\epsilon$, i.e., 
\begin{align}
\mathscr{N}_i(\pi_s) \geq r_i+\epsilon, \label{eq:stationary}
\end{align}
for all $i$.

Consider  requirement $\mathbf{r}'=(r'_1, \cdots, r'_N)$ where $r'_i=r_i/\sqrt{M}$ for all~$i$. 
Next, applying Lemma~\ref{lemma:lyapunov} to  Alg.~\ref{alg:approximate}, we  conclude that Alg.~\ref{alg:approximate} can fulfill requirement $\mathbf{r}'$ because
\begin{align*}
&E[L(\mathbf{Q}(t+1))-L(\mathbf{Q}(t))|\mathbf{Q}(t)]\\
\mathop{\leq}^{(a)}& B+2\sum_{i=1}^N Q_i(t) \cdot r'_i - 2\sum_{i=1}^N Q_i(t) \cdot E[e_i(t; \text{Alg.~\ref{alg:approximate}}) | \mathbf{Q}(t)]\\
\mathop{\leq}^{(b)}& B+2\sum_{i=1}^N Q_i(t)  \frac{r_i}{\sqrt{M}}- \frac{2}{\sqrt{M}}\sum_{i=1}^N Q_i(t)  E[e_i(t;\text{Alg.~\ref{alg:max-weight}}) | \mathbf{Q}(t)]\\
\mathop{\leq}^{(c)}& B+2\sum_{i=1}^N Q_i(t)  \frac{r_i}{\sqrt{M}} - \frac{2}{\sqrt{M}}\sum_{i=1}^N Q_i(t)  E[e_i(t;\pi_{s}) | \mathbf{Q}(t)]\\
\mathop{=}^{(d)}& B+2\sum_{i=1}^N Q_i(t)  \frac{r_i}{\sqrt{M}} - \frac{2}{\sqrt{M}}\sum_{i=1}^N Q_i(t)  \mathscr{N}_i(\pi_{s})\\
=&B + \frac{2}{\sqrt{M}}\sum_{i=1}^N Q_i(t)\left(r_i - \mathscr{N}_i(\pi_s)\right)\\
\mathop{\leq}^{(e)} & B - \frac{2\epsilon}{\sqrt{M}}\sum_{i=1}^N Q_i(t),
\end{align*}
where (a) follows \cite[Chapter 4]{neely2010stochastic} with some constant $B>0$; (b) is because $r'_i=r_i/\sqrt{M}$ and the approximation ratio of Alg.~\ref{alg:approximate} to Eq.~(\ref{eq:opt})  is $\sqrt{M}$ (as stated in Lemma~\ref{lemma:apx}); (c) is because Alg.~\ref{alg:max-weight} (in Line~\ref{alg:max-weight:d}) maximizes the value of $ \sum_{i=1}^N Q_i(t) \cdot E[e_i(t;\pi) | \mathbf{Q}(t)]$ among all possible scheduling algorithms~$\pi$; (d) is because decision $\mathbf{D}(t)$ under stationary scheduling algorithm~$\pi_s$ depends on the state only (regardless of the queue sizes) and also the state is i.i.d. over frames, yielding $E[e_i(t;\pi_{s}) | \mathbf{Q}(t)]=E[e_i(t;\pi_{s})]=\mathscr{N}_i(\pi_{s})$ for all $\mathbf{Q}(t)$ and~$t$; (e) follows Eq.~(\ref{eq:stationary}).

{\small
	\bibliographystyle{IEEEtran}
	\bibliography{IEEEabrv,ref}

\begin{thebibliography}{10}
\providecommand{\url}[1]{#1}
\csname url@samestyle\endcsname
\providecommand{\newblock}{\relax}
\providecommand{\bibinfo}[2]{#2}
\providecommand{\BIBentrySTDinterwordspacing}{\spaceskip=0pt\relax}
\providecommand{\BIBentryALTinterwordstretchfactor}{4}
\providecommand{\BIBentryALTinterwordspacing}{\spaceskip=\fontdimen2\font plus
\BIBentryALTinterwordstretchfactor\fontdimen3\font minus
  \fontdimen4\font\relax}
\providecommand{\BIBforeignlanguage}[2]{{%
\expandafter\ifx\csname l@#1\endcsname\relax
\typeout{** WARNING: IEEEtran.bst: No hyphenation pattern has been}%
\typeout{** loaded for the language `#1'. Using the pattern for}%
\typeout{** the default language instead.}%
\else
\language=\csname l@#1\endcsname
\fi
#2}}
\providecommand{\BIBdecl}{\relax}
\BIBdecl

\bibitem{dean2008mapreduce}
J.~Dean and S.~Ghemawat, ``{MapReduce: Simplified Data Processing on Large
  Clusters},'' \emph{Communications of the ACM}, vol.~51, no.~1, pp. 107--113,
  2008.

\bibitem{matskin2003scalable}
A.~Tveit, {\O}.~Rein, J.~V. Iversen, and M.~Matskin, ``{Scalable Agent-Based
  Simulation of Players in Massively Multiplayer Online Games},'' \emph{Proc.
  of SCAI}, p. 80–89.

\bibitem{chowdhury2012coflow}
M.~Chowdhury and I.~Stoica, ``{Coflow: A Networking Abstraction for Cluster
  Applications.}'' \emph{Proc. of ACM HotNets}, pp. 31--36, 2012.

\bibitem{chowdhury2014efficient}
M.~Chowdhury, Y.~Zhong, and I.~Stoica, ``{Efficient Coflow Scheduling with
  Varys},'' \emph{Proc. of ACM SIGCOMM}, vol.~44, no.~4, pp. 443--454, 2014.

\bibitem{li2016efficient}
Y.~Li, S.~H.-C. Jiang, H.~Tan, C.~Zhang, G.~Chen, J.~Zhou, and F.~Lau,
  ``{Efficient Online Coflow Routing and Scheduling},'' \emph{Proc. of ACM
  MobiHoc}, pp. 161--170, 2016.

\bibitem{shafiee2018improved}
M.~Shafiee and J.~Ghaderi, ``{An Improved Bound for Minimizing the Total
  Weighted Completion Time of Coflows in Datacenters},'' \emph{{IEEE/ACM}
  Trans. Netw.}, vol.~26, no.~4, pp. 1674--1687, 2018.

\bibitem{tseng2018coflow}
S.-H. Tseng and A.~Tang, ``{Coflow Deadline Scheduling via Network-Aware
  Optimization},'' \emph{Proc. of Allerton}, pp. 829--833, 2018.

\bibitem{im2019matroid}
S.~Im, B.~Moseley, K.~Pruhs, and M.~Purohit, ``{Matroid Coflow Scheduling},''
  \emph{Proc. of ICALP}, pp. 1--14, 2019.

\bibitem{wang2018survey}
S.~Wang, J.~Zhang, T.~Huang, J.~Liu, T.~Pan, and Y.~Liu, ``{A Survey of Coflow
  Scheduling Schemes for Data Center Networks},'' \emph{{IEEE} Commun. Mag.},
  vol.~56, no.~6, pp. 179--185, 2018.

\bibitem{ananthanarayanan2013effective}
G.~Ananthanarayanan, A.~Ghodsi, S.~Shenker, and I.~Stoica, ``{Effective
  Straggler Mitigation: Attack of the Clones},'' \emph{Proc. of NSDI}, pp.
  185--198, 2013.

\bibitem{liang2017coflow}
Q.~Liang and E.~Modiano, ``{Coflow Scheduling in Input-Queued Switches: Optimal
  Delay Scaling and Algorithms},'' \emph{Proc. of IEEE INFOCOM}, pp. 1--9,
  2017.

\bibitem{li2018efficient}
B.~Li, Z.~Shi, and A.~Eryilmaz, ``{Efficient Scheduling for Synchronized
  Demands in Stochastic Networks},'' \emph{Proc. of IEEE WiOpt}, pp. 1--8,
  2018.

\bibitem{hou2013packets}
I.-H. Hou and P.~R. Kumar, \emph{{Packets with Deadlines: A Framework for
  Real-Time Wireless Networks}}.\hskip 1em plus 0.5em minus 0.4em\relax Morgan
  \& Claypool Publishers, 2013, vol.~6, no.~1.

\bibitem{neely2010stochastic}
M.~J. Neely, \emph{{Stochastic Network Optimization with Application to
  Communication and Queueing Systems}}.\hskip 1em plus 0.5em minus 0.4em\relax
  Morgan \& Claypool Publishers, 2010, vol.~3, no.~1.

\bibitem{georgiadis2006resource}
L.~Georgiadis, M.~J. Neely, and L.~Tassiulas, \emph{{Resource Allocation and
  Cross-Layer Control in Wireless Networks}}.\hskip 1em plus 0.5em minus
  0.4em\relax Now Publishers, Inc., 2006, vol.~1, no.~1.

\bibitem{halldorsson1998independent}
M.~M. Halld{\'o}rsson, J.~Kratochv{\'\i}l, and J.~A. Telle, ``{Independent Sets
  with Domination Constraints},'' \emph{Proc. of ICALP}, pp. 176--187, 1998.

\bibitem{lee2017speeding}
K.~Lee, M.~Lam, R.~Pedarsani, D.~Papailiopoulos, and K.~Ramchandran,
  ``{Speeding Up Distributed Machine Learning using Codes},'' \emph{{IEEE}
  Trans. Inf. Theory}, vol.~64, no.~3, pp. 1514--1529, 2017.

\bibitem{yang2019timely}
C.-S. Yang, R.~Pedarsani, and A.~S. Avestimehr, ``{Timely-Throughput Optimal
  Coded Computing over Cloud Networks},'' \emph{Proc of ACM MobiHoc}, pp.
  301--310, 2019.

\end{thebibliography}
}

\end{document}